\newcommand{\avg}[1]{\left< #1 \right>} 
\newtheorem{theorem}{Theorem}[section] 
\newtheorem{proposition}[theorem]{Proposition}
\newtheorem{definition}[theorem]{Definition}
\newtheorem{lemma}[theorem]{Lemma}
\newtheorem{corollary}[theorem]{Corollary}
\newtheorem{remark}[theorem]{Remark}
\newcommand{\vol}{\mathrm{vol}}
\newcommand{\newt}{\mathrm{Newt}}
\newcommand{\conv}{\mathrm{conv}}
\newcommand{\RR}{\mathbb{R}}
\newcommand{\PP}{\mathbb{P}}
\newcommand{\NN}{\mathbb{N}}
\newcommand{\ZZ}{\mathbb{Z}}
\newcommand{\QQ}{\mathbb{Q}}
\newcommand{\dd}{\partial}
\newcommand{\U}{\mathcal{U}}
\newcommand{\F}{\mathcal{F}}
\newcommand{\G}{\mathcal{G}}
\newcommand{\orcid}[1]{\href{https://orcid.org/#1}{\includegraphics[width=10pt]{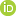}}}
\newtheorem*{theorem*}{Theorem}
\let\@wraptoccontribs\wraptoccontribs
\def\p@subsection{}
\begin{document}


\title{Cohen-Macaulay Property of Feynman Integrals}

\author{Felix Tellander\orcid{0000-0001-6418-8047}}
\email{felix@tellander.se}
\affiliation{Deutsches Elektronen-Synchrotron DESY, Notkestr.~85, 22607 Hamburg, Germany}

\author{Martin Helmer\orcid{0000-0002-9170-8295}}
 \email{mhelmer@ncsu.edu}
\address{Mathematical Sciences Institute, The Australian National University,\\ Canberra, Australia}
\author{\small (with Appendix by Uli Walther, {\em Department of Mathematics, 
Purdue University})\vspace{8mm}}
 \email{walther@purdue.edu}
\date{\today}

\begin{abstract}\noindent{\bf Abstract:}
The connection between Feynman integrals and GKZ $A$-hypergeometric systems has been a topic of recent interest with advances in mathematical techniques and computational tools opening new possibilities; in this paper we continue to explore this connection. To each such hypergeometric system there is an associated toric ideal, we prove that the latter has the Cohen-Macaulay property for two large families of Feynman integrals. This implies, for example, that both the number of independent solutions and dynamical singularities are independent of space-time dimension and generalized propagator powers. Furthermore, in particular, it means that the process of finding a series representation of these integrals is fully algorithmic.

\end{abstract}

\maketitle
\section{Introduction}\label{sec:Intro} Much of our understanding of physical amplitudes in quantum field theory is tied to their perturbative expansion in terms of Feynman diagrams. This makes Feynman diagrams and their associated integrals central objects in quantum field theory \cite{Folland,Weinberg:1995mt,Peskin:1995ev}. The analytic view of Feynman integrals is as old as the integrals themselves, e.g.~to guarantee causality they are often continued into the complex plane in a predetermined manner. An algebraic viewpoint is not as common in physics, even though it has been known for a some time \cite{Kashiwara1976}, see also \cite{GKZ1}. Recently the algebraic methods of Gelfand, Kapranov and Zelevinski \cite{GKZ1,GKZ1b,GKZ2,gelfand2008discriminants}, using what are now called GKZ $A$-hypergeometric systems, in tandem with the Lee-Pomeransky representation of Feynman integrals \cite{Lee2013} have attracted interest (see e.g.~\cite{delaCruz:2019,Klausen2019,FENG2020,Klemm2019,Bonisch2020} also \cite{Kalmykov:2020cqz}), partially due to the computational utility of this perspective. In this paper we focus on the study of Feynman integrals using this GKZ $A$-hypergeometric system point of view. 

Throughout this paper we will {assume that the underlying Feynman graph $G$ is two-edge connected, or in common physics terminology, $G$ is one particle irreducible (1PI)}. This means that at least two edges in the graph have to be cut for the graph to become disconnected. This is not a substantial restriction from a physical point of view as any connected amplitude can be factored into its 1PI components. Moreover, all integrals are assumed to be dimensionally regularized with generalized dimension $D$.

More precisely we consider scalar Feynman integrals arising from a 1PI Feynman diagram, i.e.~a graph $G=(V,E)$ where each edge is labeled with a mass $m_e$, momentum $k_e$, and propagator $1/(k_e^2-m_e^2)$ and certain vertices are labeled with a momentum $p(v)$. This set of distinguished vertices are called external vertices, $V_{\rm ext}$, and are required to satisfy momentum conservation $\sum_{v\in V_{\rm ext}} p(v)=0$.
Such integrals can be converted to the {\em Lee-Pomeransky} form, which is the standard form we will use here. For a graph $G$ we work over $\RR^{|E|}$ where $|E|$ is the size of the edge set $E$ of the graph $G$. We will also define the {Symanzik} polynomials $\mathcal{U}$ and $\mathcal{F}$ associated to $G$, cf.~\cite{Bogner:2010kv}. The polynomial $\mathcal{U}$ is obtained by summing over all spanning trees in $G$ and for each such tree adding a monomial consisting of all variables whose edge is not in the spanning tree, to obtain $\mathcal{F}$ we sum a polynomial depending on $\mathcal{U}$ with one obtained by summing over spanning two-forests of $G$. Given a spanning two-forest $F=(T,T')$ of $G$ set $p(F)=\sum_{v\in T\cap V_{\rm ext}}p(v)$. In symbols the  {\em Symanzik polynomials}  are:\begin{align}
    \mathcal{U}&=\sum_{\substack{~~~\,T {\rm \;a \; spanning} \\ {\rm tree \; of \; }G}}\;\;\;\prod_{e\not\in T}x_e,\label{eq:U_Symanzik}\\
    \mathcal{F}&=\mathcal{U}\sum_{e\in E}m_e^2x_e+ \sum_{\substack{F {\rm \;a \; spanning} \\ {\rm 2-forest \; of \; }G}}\;\;\; |p(F)|^2\prod_{e\not\in F}x_e,\label{eq:F_Symanzik}
\end{align} where $m_e$ is the mass associated to the edge $e$ and $|p(F)|^2$ is obtained from the Wick rotation of the Lorentz form $p(F)^2\to -|p(F)|^2$. If the Wick rotation is undone, we consider the Euclidean reagion s.t. $p(F)^2<0$ for every $F$.

Our main result is a theorem stating that in many cases the Newton polytope $P=\newt(\U+\F)$ (cf.~\eqref{eq:A_mat_G}) associated to a Feynman integral is normal. This proves a weaker version of the conjecture about existence of unimodular triangulations proposed in \cite{Klausen2019} for our considered classes of diagrams. When working with Feynman integrals from the GKZ $A$-hypergeometric system perspective we will also associate an ideal $I_A$ to such a system. Our main result will directly imply that this ideal $I_A$ is Cohen-Macaulay; this in turn has several important theoretical and computational consequences which are discussed in more depth in Subsection~\ref{sec:FeynmanGKZ}.  
\begin{theorem}[Main Theorem]
Let $G=(V,E)$ be a Feynman diagram with associated {Symanzik} polynomials $\mathcal{U}$ and $\mathcal{F}$. Set $\mathcal{G}=\mathcal{U}+\mathcal{F}$, then the Newton polytope $P_G=\newt(\G)$ is normal if either
\begin{itemize}
    \item $m_e\neq 0$ for all $e\in E$, or\\
    \item $m_e= 0$ for all $e\in E$ and every vertex is connected to an external off-shell leg, i.e. $p_v^2\neq 0$ for every $v\in V=V_\mathrm{ext}$.
\end{itemize}
\label{thm:MainTheoremIntro}   
\end{theorem}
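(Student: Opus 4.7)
My plan is to prove normality of $P_G$ by lifting it to the base polytope of an auxiliary matroid in Case~2 and decomposing it as a Minkowski sum of normal pieces in Case~1, then invoking a Nash--Williams / Edmonds matroid packing argument in each setting. First I would catalog the vertices of $P_G$. In Case~2 every monomial of $\mathcal{G}=\mathcal{U}+\mathcal{F}_0$ is squarefree and the vertices are $\mathbf{1}_{E\setminus T}$ (at level $L=|E|-|V|+1$) for spanning trees $T$ and $\mathbf{1}_{E\setminus F}$ (at level $L+1$) for spanning 2-forests $F$; the off-shell condition $p_v^2\neq 0$ ensures that every coefficient $|p(F)|^2$ is nonzero so all 2-forests actually contribute. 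In Case~1, writing $\mathcal{G}=\mathcal{U}(1+\sum_e m_e^2 x_e)+\mathcal{F}_0$, I would establish the Minkowski identity
\begin{equation*}
P_G \;=\; \newt(\mathcal{U})\;+\;\Delta, \qquad \Delta:=\conv(0,e_1,\ldots,e_{|E|}),
\end{equation*}
using that every spanning 2-forest is obtained from some spanning tree by removing a single edge, so $\newt(\mathcal{F}_0)$ is already contained in $\newt(\mathcal{U})+\Delta$.

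For Case~2 I would adjoin a homogenizing coordinate $z$ to obtain $\tilde P_G\subset\RR^{|E|+1}$ whose vertices $(\mathbf{1}_{E\setminus T},1)$ and $(\mathbf{1}_{E\setminus F},0)$ all lie at the common level $L+1$. I would then claim that
\begin{equation*}
\mathcal{B} \;=\; \bigl\{(E\setminus T)\cup\{z\}\bigr\} \;\cup\; \bigl\{E\setminus F\bigr\}
\end{equation*}
is the basis family of a matroid $\tilde M$ on $E\cup\{z\}$. The exchange axiom is routine in every case except that of swapping $e\in F_2\setminus T_1$ with an $f\in T_1\setminus F_2$; here $f$ must be chosen on the fundamental cycle of $e$ in $T_1$, and such a choice is always possible because this cycle cannot lie entirely in $F_2$---otherwise $F_2\cup\{e\}$ would contain a cycle, contradicting that $F_2$ is a forest. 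Since $\tilde P_G$ projects bijectively on lattice points onto $P_G$ at every dilation, normality of $P_G$ is equivalent to normality of the matroid base polytope $\tilde P_G$.

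Normality of $\tilde P_G$ is then the matroid-partition statement: for every $\alpha\in kP_G\cap\ZZ^{|E|}$ at level $kL+j$, the edge-multiset $k\mathbf{1}_E-\alpha$ of $G$ must decompose into $k-j$ spanning trees and $j$ spanning 2-forests. I would invoke Edmonds' matroid partition theorem applied to $M(G)$ together with its rank-$(|V|-2)$ truncation, and verify that the required rank inequalities coincide with the facet description of $kP_G$; the 1PI hypothesis rules out degenerate partitions collapsing across a bridge. For Case~1 I would instead use the Minkowski decomposition $P_G=\newt(\mathcal{U})+\Delta$: $\newt(\mathcal{U})$ is normal by the Nash--Williams spanning-tree packing theorem, $\Delta$ is unimodular, and any $\gamma\in kP_G\cap\ZZ^{|E|}$ should split integrally as $\gamma=\gamma_U+\gamma_\Delta$ by a greedy peeling of surplus coordinates, after which the IDP of each summand finishes the decomposition. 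The main obstacle will be the combinatorial verification in either setting that the polyhedral inequalities defining $kP_G$ are exactly tight enough to yield the matroid rank bounds required by Edmonds (Case~2) or to guarantee the integral Minkowski split (Case~1); this will require a careful case-by-case analysis on $M(G)$ in which the 2-edge-connectedness and, in Case~2, the off-shell hypothesis are used to control the rank function on every subset of edges.
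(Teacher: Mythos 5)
Your two decompositions are exactly the paper's: in the massive case $P_G=\newt(\U)+\widetilde\Delta$ with $\widetilde\Delta=\conv(0,e_1,\ldots,e_{|E|})$, and in the massless case $P_G$ is the Cayley sum $P_U * P_{F_0}$ (your lift by $z$ is literally the Cayley polytope). Where you part ways is in how you extract normality from these decompositions, and in the massive case that is where the argument breaks.

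In Case~1 you propose to deduce IDP of $P_G$ from IDP of $\newt(\U)$ (via Nash--Williams) and unimodularity of $\Delta$ by a ``greedy peeling'' that splits any $\gamma\in kP_G\cap\ZZ^{|E|}$ as $\gamma_U+\gamma_\Delta$. This step is a real gap: the Minkowski sum of two IDP lattice polytopes is \emph{not} in general IDP, and more to the point the integral splitting $(kP+kQ)\cap\ZZ^d=(kP\cap\ZZ^d)+(kQ\cap\ZZ^d)$ is false for arbitrary normal $P,Q$, so nothing like a greedy argument can work without extra structure. What makes the split go through here is that $P_U$ and $\widetilde\Delta$ are both \emph{edge-unimodular with respect to a common unimodular matrix} (the edges of $P_U$ are of the form $e_i-e_j$ by Gelfand--Goresky--MacPherson--Serganova, and those of $\widetilde\Delta$ are of the form $e_i-e_j$ or $e_i$, all columns of the totally unimodular $(I\mid A)$). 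Then the Howard/Danilov--Koshevoy theorem gives both the Minkowski compatibility and the IDP of each summand at once. Without invoking some result of this type your Case~1 argument does not close.

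For Case~2 your route is genuinely different from the paper's and, if completed, would work; it is in fact a nice observation. The paper shows $P_U$ and $P_{F_0}$ are each matroid base polytopes and then applies Tsuchiya's criterion for Cayley sums together with Howard's theorem. You instead assert that the lifted vertex set $\{(E\setminus T)\cup\{z\}\}\cup\{E\setminus F\}$ is itself the basis family of a matroid on $E\cup\{z\}$. This is true: your $\tilde M$ is an elementary (Higgs) lift of $M_U^*$ through $(T(M_U))^*$, the dual of the rank-$(|V|-2)$ truncation of the graphic matroid. But your exchange-axiom check covers only one of the four necessary case pairs (both bases avoiding $z$, both containing $z$, and the two mixed orientations). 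The ``both avoiding $z$'' case is exactly the nontrivial 2-forest exchange lemma that the paper proves separately (its Lemma on $P_{F_0}$), so calling the remaining cases ``routine'' hides the part that actually carries content. Once $\tilde M$ is established, however, you do not need Edmonds' matroid partition theorem at all: matroid base polytopes are edge-unimodular, hence IDP, and your Cayley lift projects isomorphically onto $P_G$ in every dilation, so normality follows immediately. Bringing in Edmonds and a facet-by-facet comparison of rank inequalities with the inequalities cutting out $kP_G$ is a much harder route than necessary, and you yourself flag that this verification is unfinished.

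One smaller point: the 1PI hypothesis plays no role in either of your sketches beyond an offhand remark about bridges; in the paper it is used (in the flag-matroid lemma for $P_F$) to control quotients of $M_U$, not to rule out ``degenerate partitions,'' so that remark does not land.
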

\noindent The second case especially includes all polygon diagrams like the triangle, box or pentagon.

We prove this theorem in two parts, the massive case is treated in Theorem \ref{thm:MainThm} and the massless case in Theorem \ref{thm:MainThmMassless}. In short, this result means that not only can we expect the hypergeometric systems associated to a Feynman diagram to have desirable mathematical properties, but additionally we can expect that the associated Gr\"obner deformation will be straightforward to compute, allowing us to obtain series solutions effectively in an algorithmic manner. 

\subsection{Feynman Integrals and Hypergeometric Systems}\label{sec:FeynmanGKZ}
Let $b\in \ZZ_{\geq 0}^{|E|}$ be an integral vector, and $D\in \RR$; after conversion to {Lee-Pomeransky} form the Feynman integral associated to the graph $G$ is the integral $\mathfrak{I}_G(D, b)$ given by \begin{equation}
\mathfrak{I}_G(D,b):=\frac{\Gamma(D/2)}{\Gamma(D/2-\varsigma)\Gamma(b_1)\cdots \Gamma(b_{|E|})}\int_{\RR^{|E|}_{>0}}\frac{x_1^{b_1-1 }\cdots x_{|E|}^{b_{|E|}-1}}{\mathcal{G}(x)^{D/2}} dx_1\cdots dx_{|E|}
\label{eq:LeePomeranskyForm}
\end{equation}where $\mathcal{G}=\mathcal{U}+\mathcal{F}$, and $\varsigma:=\sum_{i}b_i- L\cdot D/2$ with $L$ the number of independent cycles in the graph $G$.

Suppose that for a given Feynman diagram $G$ the polynomial $\mathcal{G}$ has the form $\mathcal{G}=\sum_{i=1}^r \tilde{c}_i x^{a_i}$. Note that the $\tilde{c}_i$ are explicitly given constants determined by the momenta, masses and graph structure. To consider this as an $A$-hypergoemtric system we will instead take the coefficients as undetermined parameters and consider $\mathcal{G}=\sum_{i=1}^r c_i x^{a_i}$ as a polynomial in the ring $\QQ(D)[c_1,\dots, c_r][x_1,\dots, x_{|E|}]$, this recovers our original polynomial $\mathcal{U}+\mathcal{F}$ in $\QQ(D)[x_1,\dots, x_{|E|}]$  when we set $c_i=\tilde{c}_i$. We abuse notation and use  $\U$, $\mathcal{F}$, and $\mathcal{G}$ to denote both the polynomials in $\QQ(D)[c_1,\dots, c_r][x_1,\dots, x_{|E|}]$ and the resulting polynomial in $\QQ(D)[x_1,\dots, x_{|E|}]$  when we set $c_i=\tilde{c}_i$.  The polynomial $\mathcal{G}$ determines an $(|E|+1)\times r$ integer matrix $A$ obtained by adding a row of ones above the matrix with column vectors the exponents $a_i$ of $\mathcal{G}$: \begin{equation}
A=A_-\times\{1\}:=\begin{pmatrix}
1& 1 & \cdots  & 1 &1\\
a_1 & a_2& \cdots  & a_{r-1} &a_r
\end{pmatrix}\in \NN^{(|E|+1)\times r},\label{eq:A_mat_G}
\end{equation} where $A_-=\begin{pmatrix}
a_1 & a_2& \cdots  & a_{r-1} &a_r
\end{pmatrix}\in \NN^{|E|\times r}$ is the matrix whose columns are the exponent vectors of $\mathcal{G}$. We will refer to the Newton polytope of $\G$, $\newt(\G)=\conv(\{a_1,\ldots,a_r\})$, defined by the convex hull of the vectors as the  {\em Symanzik polytope}. We suppose this polytope is given in half-space representation as
\begin{equation}
    \newt(\G)=\bigcap_{i=1}^N\left\lbrace\sigma\in\RR^{|E|}\ |\ \avg{\mu_i,\sigma}\leq \nu_i\right\rbrace\label{eq:NewtGHalfSpace}
\end{equation} where $\mu_{i}\in \RR^{|E|}$, $\nu\in \RR^N$.

Now return to considering the Feynman integral $\mathfrak{I}_G(D,b;c)$, which we now take as a function of $c$ since we consider $\mathcal{G}$ as a polynomial in $\QQ(D)[c_1,\dots, c_r][x_1,\dots, x_{|E|}]$. The integral $\mathfrak{I}_G(D,b;c)$ is a special case of a so called {\em Euler-Mellin integral}; it is shown in \cite{Berkesch2014} that such integrals admit a meromorphic continuation, giving \begin{equation}
\mathfrak{I}_G(D,b;c)=\Phi(D,b;c)\prod_{i=1}^N\Gamma(\nu_iD/2-\avg{\mu_i,b})
\end{equation}for some function $\Phi$ entire in $D$ and $b$; note $\nu$, $\mu$ are as in \eqref{eq:NewtGHalfSpace}. 
We will also define a vector $\beta$ determined by the vector $b$ and the value $D$ appearing in the Feynman integral in Lee-Pomeransky form \eqref{eq:LeePomeranskyForm}, that is \begin{equation}\beta=\begin{pmatrix}-D/2\\ -b_1\\ \vdots\\  -b_{|E|}\end{pmatrix}.\end{equation}

The function $\Phi(D,b;c)$ is a GKZ A-hypergeometric function of $c$ and satisfies the GKZ $A$-hypergeometric system $H_A(\beta)$, which we now define. Let $W=\QQ(D)[c_1, \dots, c_r, \dd_1, \dots, \dd_r]$ be a Weyl algebra with  $\dd_i$ denoting the differential operator association to $c_i$ (i.e.~$\dd_i$ acts as differentiation by $c_i$ on a polynomial in $\QQ[c_1, \dots, c_r]$) and let $I_A=\langle\dd^u-\dd^v\;|\; Au=Av\rangle$ be the {\em toric ideal} in $\QQ[ \dd_1, \dots, \dd_r]$ defined by the matrix $A$ as in \eqref{eq:A_mat_G} above; the toric ideal is a prime binomial ideal and such ideals define toric varieties, see \cite[Chapter~4]{sturmfels1996grobner}. Writing $A=[a_{i,j}]$, the system $H_A(\beta)$ is a left ideal $H_A(\beta):=I_A+Z_A(\beta)$ in $W$ where \begin{equation}
Z_A(\beta)=\left\langle 
\sum_{j=1}^r a_{i,j}c_j\dd_j-\beta_i \;|\;i=1,\dots, |E|+1
\right\rangle.    
\end{equation}
Finding a basis consisting of holomorphic functions for the space of solutions to the $A$-hypergeometric system $H_A(\beta)$ gives an expression for $\Phi(D,b;c)$, and hence an expression for the Feynman integral $\mathfrak{I}_G(D,b;c)$. By the Cauchy-Kowalevskii-Kashiwara Theorem (see also \cite[Theorem 1.4.19]{SST}) the dimension of the complex vector space of solutions to the system $H_A(\beta)$ in a neighbourhood of a smooth point is equal to ${\rm rank}(H_A(\beta))$, the {\em holonomic rank} of the ideal $H_A(\beta)$. Results of \cite{adolphson1994hypergeometric,GKZ1}, see also \cite[Theorem~4.3.8]{SST}, tell us that if the toric ideal $I_A$ is Cohen-Macaulay for a given $A$ then ${\rm rank}(H_A(\beta))=(|E|!)\cdot \vol(\conv(A))$ and the singular points where solutions to the system $H_A(\beta)$ do not exist are independent of $\beta$. 

A basis for the solution space to the system $H_A(\beta)$ may be computed using techniques described in \cite[Chapter~3]{SST}. An important step in this computation is finding the {\em Gr\"obner deformation} of $H_A(\beta)$ with respect to a {\em weight vector} $\omega\in \RR^{r}$, denoted ${\rm in}_{(-\omega,\omega)}(H_A(\beta))$. This is also greatly simplified in the case $I_A$ is Cohen-Macaulay since in this case \begin{equation}{\rm in}_{(-\omega,\omega)}(H_A(\beta))=Z_A(\beta)+{\rm in}_\omega(I_A),
\end{equation} \cite[Theorem~4.3.8]{SST}, where the later expression ${\rm in}_\omega(I_A)$ is the {\em initial ideal} (or lead term ideal) of $I_A$. The initial ideal of $I_A$ can be computed directly from a Gr\"obner basis of $I_A$, which is in turn straightforward to obtain using standard methods. We obtain the appropriate weight vectors $\omega$ by computing the Gr\"obner fan of $I_A$ and choosing a (generic) representative vector $\omega$ from each cone in the Gr\"obner fan of $I_A$, an efficient procedure (and accompanying software implementation) for computing this Gr\"obner fan of such a toric ideal is detailed in \cite{huber2000computing}. Gr\"obner fans can also be computed using the package Gfan \cite{gfan}, we make use of this implementation via it's Macualay2 \cite{M2} interface  in Section \ref{sec:example} below.
\subsection{Example}\label{sec:example}
We illustrate this process on the Feynman diagram $G$ shown in Figure \ref{fig:massiveBubble}. For further reading on the techniques employed in our example we recommend the book \cite{SST}.

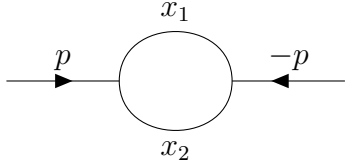
\begin{figure}[h!]
    \centering
    \begin{tikzpicture}[baseline=-\the\dimexpr\fontdimen22\textfont2\relax]
    \begin{feynman}
    \vertex (a);
    \vertex [right = of a] (b);
    \vertex [right = of b] (c);
    \vertex [right = of c] (d);
    \diagram* {
	    (a) --[fermion, edge label=\(p\)] (b) -- [half left,edge label=\({x_1}\)](c) -- [anti fermion, edge label=\(-p\)](d), (c) -- [half left,edge label=\({x_2}\)](b),
};
\end{feynman}
\end{tikzpicture}
    \caption{Feynman diagram $G$ for the massive bubble. There are two edges, with mass $m_1$ associated to the edge $x_1$, and $m_2$ associated to the edge $x_2$ and two vertices with (external) momenta $p$ and $-p$, respectively. }
    \label{fig:massiveBubble}
\end{figure}

In $D$ dimensions the classical presentation for the Feynman integral for the diagram in Figure \ref{fig:massiveBubble} is
\begin{equation}
    \mathfrak{I}=\int \frac{d^Dk}{\pi^{D/2}}\frac{1}{k^2-m_1^2}\frac{1}{(k-p)^2-m_2^2}.
\end{equation}
After Wick-rotating, introducing Feynman parameters and integrating over the loop momenta, this integral can be written in the Lee-Pomeransky form (up to some factors of $\pi$ and $i$), as in \eqref{eq:LeePomeranskyForm} with $b=(1,1)$ as
\begin{align}
\mathfrak{I}_G(D,b)=\mathfrak{I}_G(D,(1,1))&=\frac{\Gamma(D/2)}{\Gamma(D-2)}\int_{\RR_+^2}\ (\U(x)+\F(x))^{-D/2}dx_1dx_2,\\
    \U(x)&=x_1+x_2,\ \F(x)=(m_1^2+m_2^2+|p|^2)x_1x_2+m_1^2x_1^2+m_2^2x_2^2
\end{align}
where $|p|^2>0$ is the Euclidean norm obtained by Wick rotation: $p^2\to-|p|^2$. This integral is a special case of the Euler-Mellin integral which admits the meromorphic continuation
\begin{equation}
    \int_{\RR_+^2}\ (\U+\F)^{-D/2}dx_1dx_2=\Gamma(2-D/2)\Gamma(D-2)\Phi(D)
\end{equation}
where $\Phi(D)$ is an entire analytic function. Treating all the coefficients of the polynomial $\U+\F$ as arbitrary coefficients $c_i$, gives \begin{equation*}
\mathcal{G}(x,c)=\U(x,c)+\F(x,c)=c_1x_1+c_2x_2+c_3x_1x_2+c_4x_1^2+c_5x_2^2.
\end{equation*} Then the function $\Phi(D;c)$ associated to the resulting integral \begin{equation}\label{eq: ex meromorphic}\mathfrak{I}_G(D,1;c)=\int_{\RR_+^2}\ \mathcal{G}(x,c)^{-D/2}dx_1dx_2=\Gamma(2-D/2)\Gamma(D-2)\Phi(D;c)\end{equation}  is $A$-hypergeometric as a function of $c$ and satisfies the $A$-hypergeometric system $H_A(\beta)$ with
\begin{equation}
    A=\begin{pmatrix}
    1&1&1&1&1\\
    1&0&1&2&0\\
    0&1&1&0&2
    \end{pmatrix}=\{1\}\times\newt(\G),\ \ \ \beta=\begin{pmatrix} -D/2 \\ -1\\ -1  \end{pmatrix}.
\end{equation}
Now let $W$ be the Weyl algebra
\begin{equation}
    W=\mathbb{Q}(D)[c_1,\ldots,c_5,\dd_1,\ldots,\dd_5].
\end{equation}
Then the $A$-hypergemoetric system $H_A(\beta)=Z_A(\beta)+I_A$ is the left-ideal in $W$ defined by
\begin{align}
    I_A&=\left\langle \dd_3^2-\dd_4\dd_5,\ \dd_2\dd_3-\dd_1\dd_5,\ \dd_1\dd_3-\dd_2\dd_4 \right\rangle\\
    Z_A(\beta)&=
    \begin{cases}
    c_1\dd_1+c_2\dd_2+c_3\dd_3+c_4\dd_4+c_5\dd_5=\beta_1\\
    c_1\dd_1+c_3\dd_3+2c_4\dd_4=\beta_2\\
    c_2\dd_2+c_3\dd_3+2c_5\dd_5=\beta_3\\
    \end{cases}
\end{align}
where $I_A$ is the toric ideal in $\dd_i$ defined by $A$. Since $m_1$ and $m_2$ are assumed to be non-zero, Theorem \ref{thm:MainTheoremIntro} guarantees that the polytope $\conv(A)$ is normal which in particular implies that $I_A$ is Cohen-Macaulay. For $(-\omega,\omega)\in\RR^{10}$ the Cohen-Macaulay property of $I_A$ guarantees that the Gr\"obner deformation of $H_A(\beta)$ can be decomposed as
\begin{equation}
    \mathrm{in}_{(-\omega,\omega)}(H_A(\beta))=Z_A(\beta)+\mathrm{in}_\omega(I_A).
\end{equation}
 The procedure for constructing a series solutions to $H_A(\beta)$ consists of solving the system given by the Gr\"obner deformation $\mathrm{in}_{(-\omega,\omega)}(H_A(\beta))$ and lifting these solutions to $H_A(\beta)$ by attaching them to a $\Gamma$-series.

The solutions to $\mathrm{in}_{(-\omega,\omega)}(H_A(\beta))$ will be monomials $c^u=c_1^{u_1}\cdots c_5^{u_5}$, $u\in\mathbb{C}^5$. The toric ideal $I_A$ has a Gröbner fan consisting of seven top-dimensional cones, meaning that there are seven distinct initial ideals $\mathrm{in}_\omega(I_A)$. If we choose weight vector $\omega=(0,0,-2,1,1)$, then $I_A$ has the reduced Gröbner basis
\begin{equation}
    \{{(\dd_2\dd_4)} -\dd_1\dd_3,\ (\dd_1\dd_5) -\dd_2\dd_3,\ (\dd_4\dd_5) -\dd_3^2\}
\end{equation}
where the monomials marked with parentheses generates $\mathrm{in}_\omega(I_A)$. If $c^u$ is a solution of the initial system, then the exponent vectors must satisfy
\begin{align}
    u_2u_4=u_1u_5=u_4u_5=0,\\
    \begin{pmatrix}
    1&1&1&1&1\\
    1&0&1&2&0\\
    0&1&1&0&2
    \end{pmatrix}\begin{pmatrix}
    u_1\\u_2\\u_3\\u_4\\u_5
    \end{pmatrix}=\begin{pmatrix}
    -D/2\\-1\\-1
    \end{pmatrix}.
\end{align}
The Cohen-Macaulay property of $I_A$ guarantees that the number of solutions to these equations is the normalized volume of the polytope $\conv(A)$, i.e., these six equations have three solutions:
\begin{align}
    u^{(1)}&=\begin{pmatrix}2-D, & 0, & -1, & D/2-1, & 0    \end{pmatrix}\nonumber\\
    u^{(2)}&=\begin{pmatrix}0, & 2-D, & -1, & 0, & D/2-1    \end{pmatrix}\\
    u^{(3)}&=\begin{pmatrix}1-D/2,& 1-D/2, & D/2-2, & 0, &0 \end{pmatrix}.\nonumber
\end{align}
The three monomials $c^{u^{(1)}},c^{u^{(2)}},c^{u^{(3)}}$ generate the solution space of $\mathrm{in}_{(-\omega,\omega)}(H_A(\beta))$ and can be lifted to solutions of $H_A(\beta)$ as\small
\begin{align}
    \phi^{(i)}=&\sum_{v\in N^{(i)}}\dfrac{\Gamma(u^{(i)}+1)}{\Gamma(u^{(i)}+v+1)}c^{u^{(i)}+v}, \;\;\;{\rm with,}\\
    &N^{(1)}=\left\{v=m(-1,1,1,0,-1)+n(2,-2,0,-1,1),\ m,n\in \ZZ\ |\ m\ge 2n,\ m\le 0,\ n\ge m\right\},\nonumber\\
    &N^{(2)}=\left\{v=m(-1,1,1,0,-1)+n(2,-2,0,-1,1),\ m,n\in \ZZ\ |\ 2n\ge m,\ m\le 0,\ n\le 0\right\},\; {\rm and} \nonumber\\
    &N^{(3)}=\left\{v=m(-1,1,1,0,-1)+n(2,-2,0,-1,1),\ m,n\in \ZZ\ |\ n\le 0,\ n\ge m\right\},\nonumber
\end{align}\normalsize
 where $(-1,1,0,-1)$ and $(2,-2,0,-1,1)$ span the integral kernel of $A$ and the inequalities guarantee that the quotients of $\Gamma$-functions are always well-defined. A solution $\Phi(D;c)$ to the hypergeometric system $H_A(\beta)$ can now be written as $\Phi(D;c)=K_1\phi^{(1)}+K_2\phi^{(2)}+K_3\phi^{(3)}$. The coefficients $K_i$ must be such that the meromorphic continuation on the right hand side of (\ref{eq: ex meromorphic}) matches the left hand side on the domain of convergence of the integral. For example, $K_1$ can be determined by taking the limit $c_2,c_5\to 0$ in (\ref{eq: ex meromorphic}) where $c_2$ and $c_5$ are picked because their respective exponents in $u^{(1)}$ are zero. The integral becomes
\begin{equation}
    \int_{\RR^2_+}\frac{dx_1dx_2}{(c_1x_1+c_3x_1x_2+c_4x_1^2)^{D/2}}=\frac{\Gamma(D-2)\Gamma(1-D/2)}{\Gamma(D/2)}c_1^{2-D}c_3^{-1}c_4^{D/2-1},
\end{equation}
note  the limit is not well-defined for $\Phi(D;c)$ because $c_2$ and $c_5$ appear as denominators, or more precisely, they will have exponents with negative real part\footnote{Note that the form of $N^{(1)}$ guarantees that the limit is well-defined for $\phi^{(1)}$.}. However, the limit is well-defined in the Weyl algebra as the restriction ideal:
\begin{equation}
    \left. H_A(\beta)\right|_{c_2,c_5=0}:=(H_A(\beta)+c_2W+c_5W)\cap \QQ(D)[c_1,c_3,c_4,\dd_1,\dd_3,\dd_4].
\end{equation}
The solution space to this ideal is one-dimensional and spanned by $c_1^{2-D}c_3^{-1}c_4^{D/2-1}$, we thus interpret the limit as $\Phi(D;c)\to K_1c_1^{2-D}c_3^{-1}c_4^{D/2-1}$. Equating this with the explicitly evaluated integral and substituting into (\ref{eq: ex meromorphic}) yields
\begin{equation}
    K_1=\frac{\Gamma(1-D/2)}{\Gamma(D/2)\Gamma(2-D/2)}.
\end{equation}
Similarly we obtain
\begin{equation}
    K_2=K_1,\ \ \ K_3=\frac{\Gamma(D/2-1)\Gamma(D/2-1)}{\Gamma(D/2)\Gamma(D-2)}.
\end{equation}
We have now obtained an explicit series representation for the Feynman integral in one of the seven Gröbner cones, the same procedure can be used to obtain an explicit representation in the other cones.

The paper is organized as follows; in Section \ref{sec:background} we review several definitions and results which will be needed to prove the main theorem, Theorem \ref{thm:MainTheoremIntro}. The main theorem is proved in Section \ref{sec:MainResults}, this proof is separated into two cases, massive and massless. The massive case is treated in Subsection \ref{subsec:massive} and the massless case is treated in Subsection \ref{subsec:massless}.

\section{Background
}\label{sec:background}
In this section we briefly review several definitions and results from different areas of algebra which will be needed in Section \ref{sec:MainResults}. Readers wishing further details should consult books such as \cite{CLO,gelfand2008discriminants,eisenbud2013commutative,MS} on algebraic geometry and \cite{Oxley} on matroid theory. As was discussed in Section \ref{sec:Intro}, in the context of computing series solutions to Feynman integrals, many things become much simpler when the toric ideal $I_A$ associated to the matrix $A$ in \eqref{eq:A_mat_G} has the {\em Cohen-Macaulay} property. Since the matrix $A$ in \eqref{eq:A_mat_G} is always full rank with a row of ones the resulting toric ideal is {\em homogeneous}; recall an ideal $I$ is called homogeneous if it has a homogeneous generating set (equivalently its Gr\"obner basis consists of homogeneous polynomials), i.e.~$I=\langle g_1, \dots, g_t \rangle$ where all monomials appearing in $g_i$ have the same degree.  Hence we will restrict our attention to the case of homogeneous ideals. 

Let $I$ be a homogeneous ideal in a polynomial ring $R=k[z_1,\dots, z_r]$ over a field $k$ of characteristic zero defining a projective variety $X=V(I)\subset \mathbb{P}^{r-1}$ with $d:=\dim(I)=\dim(X)+1$.  Then $d$ homogeneous polynomials $h_1, \dots h_d$ in $R/I$ are called a {\em homogeneous system of parameters} for $R/I$ if $\dim_k(R/I+\langle h_1, \dots, h_d\rangle)<\infty$. We say that a subsequence $h_1, \dots, h_\nu$ is a {\em $(R/I)$-regular sequence of length $r$} if $R/I$ is a free $k[h_1,\dots, h_\nu]$ module, or equivalently if the Hilbert series of $I$, $H_I(z)$, is equal to the Hilbert series of $I+\langle h_1, \dots, h_\nu \rangle$ divided by the polynomial $\prod_{i=1}^\nu(1-z^{\deg(h_i)})$.
\begin{definition}[Cohen-Macaulay]
A homogeneous ideal $I$ in a polynomial $R=k[z_1,\dots, z_r]$ over a field $k$ with $d=\dim(I)$ is {\em Cohen-Macaulay} if there exists a homogeneous system of parameters $h_1, \dots h_d$ such that $h_1, \dots h_d$ is also a $(R/I)$-regular sequence of length $d$.  
\end{definition}

Our interest is in homogeneous toric ideals. That is for a full rank $(|E|+1)\times r$ integer matrix with first row the all ones vector (e.g.~as in \eqref{eq:A_mat_G}) we wish to consider the ideal $I_A=\langle z^u-z^v\;|\; Au=Av \rangle$ in the polynomial ring $k[z_1,\dots,z_{r}]$; this ideal $I_A$ is always a homogeneous prime ideal generated by a finite set of homogeneous binomials. The toric ideal $I_A$ defines  a projective toric variety $X_A=V(I_A)\subset \PP^{r-1}$. We say the semi-group $\NN A$ is {\em normal} if $$\NN A=\ZZ A \cap \RR_{\geq 0} A.$$ For toric ideals a result of Hochster’s \cite{hochster1972rings}, see also \cite[Corollary 1.7.6]{stanley2007combinatorics}, gives us a characterization of the Cohen-Macaulay property of the ideal $I_A$ in terms of the normality of the semi-group $\NN A$. 
\begin{theorem}[Hochster] If the semi-group $\NN A$ is normal then the toric ideal $I_A$ is
Cohen-Macaulay. \label{thm:Hochster}
\end{theorem}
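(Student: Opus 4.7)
The plan is to reduce the statement to a standard fact about affine semigroup algebras. First, identify the quotient $R/I_A$ with the semigroup algebra $k[\NN A] := \bigoplus_{a\in \NN A} k\cdot t^a \subset k[t_1^{\pm 1},\dots,t_{|E|+1}^{\pm 1}]$ via the surjection $k[z_1,\dots,z_r] \to k[\NN A]$, $z_i\mapsto t^{a_i}$; a short argument shows the kernel of this map is exactly $I_A$, and the row of ones in $A$ makes it an isomorphism of graded $k$-algebras (with the grading coming from total degree via that extra row). This reduces the theorem to showing that the algebra of a normal affine semigroup is Cohen-Macaulay.

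The approach I prefer is via local cohomology. The ring $k[\NN A]$ carries a natural $\ZZ A$-grading, and its local cohomology at the irrelevant maximal ideal $\mathfrak m$ admits an explicit combinatorial description (Ishida's complex, equivalently the Danilov--Stanley formula): for every $a\in \ZZ A$, the graded piece $H^i_{\mathfrak m}(k[\NN A])_a$ is computed by the reduced cohomology of a subcomplex of the poset of proper faces of the cone $\sigma := \RR_{\geq 0} A$ determined by which faces of $\sigma$ contain $a$ in their relative interior. The three steps are then: (i) set up this combinatorial description precisely; (ii) use the normality hypothesis $\NN A = \sigma \cap \ZZ A$ to show that, for each $a$, the relevant subcomplex is either empty, contractible, or homotopy equivalent to a sphere of dimension $d-2$, where $d = \dim\sigma$; (iii) conclude $H^i_{\mathfrak m}(k[\NN A]) = 0$ for all $i<d$, giving $\mathrm{depth}(k[\NN A]) = d = \dim(k[\NN A])$, which is the Cohen-Macaulay property.

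As a backup plan I would try to triangulate $\sigma$ into rational simplicial subcones spanned by subsets of columns of $A$ and use normality to show that $k[\NN A]$ is a finitely generated Cohen-Macaulay module over the polynomial subring $k[t^{a_{i_1}},\dots,t^{a_{i_d}}]$ associated to any maximal simplicial subcone, by exhibiting a shellable filtration of $k[\NN A]$ indexed by the cones of the triangulation. The main obstacle in either route is identical: bridging the combinatorics of the face lattice of $\sigma$ with the depth computation inside $k[\NN A]$. In the local-cohomology route, this means proving the sphericity/contractibility of the face-lattice subcomplexes for every $a \in \ZZ A$; in the triangulation route, it means producing a shelling and translating it into a depth statement. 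In both, the hypothesis $\NN A = \sigma \cap \ZZ A$ enters in an essential way---weaker saturation conditions (e.g.\ only in codimension one) are insufficient, and pinpointing exactly where normality is used is the most delicate part of the argument.
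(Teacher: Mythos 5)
The paper does not prove this statement. Theorem~\ref{thm:Hochster} is cited directly from Hochster's 1972 paper \cite{hochster1972rings}, with \cite[Corollary~1.7.6]{stanley2007combinatorics} given as an alternative reference, and no argument for it appears anywhere in the text. So there is no in-paper proof to compare against; what follows assesses your sketch on its own terms.

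Your reduction to the normal affine semigroup algebra $k[\NN A]$ is correct and routine, and the local-cohomology route via the Ishida complex (equivalently, the Danilov--Stanley description of the $\ZZ A$-graded pieces $H^i_{\mathfrak m}(k[\NN A])_a$) is a standard, valid modern proof of Hochster's theorem. But as written this is a plan, not a proof: the load-bearing step (ii) -- that for every $a\in\ZZ A$ the relevant subcomplex of proper faces of $\sigma=\RR_{\geq 0}A$ is empty, contractible, or a $(d-2)$-sphere -- is asserted and not argued, and you yourself flag it as the main obstacle. In the usual treatment one shows that the complex carrying $H^i_{\mathfrak m}(k[\NN A])_a$ is either all of $\partial\sigma$ (when $-a$ lies in the relative interior of $\sigma$), or the set of faces visible from a point outside $\sigma$, which is star-shaped and therefore contractible, or empty; normality of $\NN A$ enters precisely in guaranteeing that every $a\in\ZZ A$ yields one of these three geometries with no spurious holes, so that $\tilde H^{i-1}$ vanishes for all $i<d$. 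Without spelling out this trichotomy and why it kills the low-degree cohomology, the conclusion $\mathrm{depth}\,k[\NN A]=d=\dim k[\NN A]$ is not yet established. The same is true of your backup route: the shelling and the freeness-over-a-simplicial-subring claim are the whole content, and they are left open. Both outlines are pointed in the right direction, but neither yet constitutes a proof; if you need the result for the paper's purposes, citing Hochster (as the authors do) is the appropriate move, whereas actually writing out step (ii) would require reproducing several pages of \cite[Ch.~6]{stanley2007combinatorics} or the corresponding section of Bruns--Herzog.
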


Normality of a configuration of lattice points $A=A_-\times\{1\}$ can be characterised by a combinatorial property of the polytope $P=\conv{(A_-)}$:
\begin{definition}[Normal Polytope] \label{def:NormalPolytope}
A polytope $P$ is called normal, or said to have  the \emph{integer decomposition property}\footnote{This is sometimes called integrally closed.} (IDP), if for any $k\in\NN$
\begin{equation}
    kP\cap\ZZ^d=(k-1)P\cap\ZZ^d+P\cap\ZZ^d.
\end{equation}
\end{definition}

\begin{proposition}[Remark 0.1 of \cite{cox2014integer}]\label{prop:IDP_Normal_SemiGroup}
    A polytope $P$ is IDP if and only if $\NN(P\times\{1\}\cap\ZZ^{d+1})=\RR_{\ge 0}(P\times\{1\})\cap \ZZ^{d+1}$.
\end{proposition}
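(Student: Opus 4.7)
The plan is to recognise that both sides of the claimed equality are naturally $\NN$-graded by the last coordinate, and to reduce the equivalence to a degree-by-degree comparison. Set $A := (P \times \{1\}) \cap \ZZ^{d+1}$; every generator has last coordinate $1$, so an arbitrary element of $\NN A$ with last coordinate $k$ has the form $(v_1 + \cdots + v_k,\,k)$ with $v_i \in P \cap \ZZ^d$. Hence the degree-$k$ slice of $\NN A$ equals $M_k \times \{k\}$, where $M_k := \underbrace{(P \cap \ZZ^d) + \cdots + (P \cap \ZZ^d)}_{k \text{ summands}}$ is the iterated Minkowski sum (with the convention $M_0 = \{0\}$).

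For the ambient cone, a point $(w,k) \in \ZZ^{d+1}$ lies in $\RR_{\ge 0}(P \times \{1\})$ iff $(w,k) = t(p,1)$ with $t \ge 0$ and $p \in P$. Matching last coordinates forces $t = k$ and $w = kp$, so the degree-$k$ slice of $\RR_{\ge 0}(P \times \{1\}) \cap \ZZ^{d+1}$ is precisely $(kP \cap \ZZ^d) \times \{k\}$. Convexity of $P$ automatically yields $M_k \subseteq kP \cap \ZZ^d$, since if $w = v_1 + \cdots + v_k$ with $v_i \in P \cap \ZZ^d$ then $w/k \in P$. The proposition therefore reduces to the claim that $M_k = kP \cap \ZZ^d$ holds for every $k \ge 0$ if and only if the IDP condition $kP \cap \ZZ^d = (k-1)P \cap \ZZ^d + P \cap \ZZ^d$ holds for every $k \ge 1$.

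To close the loop I would induct on $k$, using the obvious identity $M_k = M_{k-1} + M_1$. Granting IDP and the inductive hypothesis $M_{k-1} = (k-1)P \cap \ZZ^d$, one has $M_k = M_{k-1} + M_1 = (k-1)P \cap \ZZ^d + P \cap \ZZ^d = kP \cap \ZZ^d$, combined with the automatic reverse inclusion this closes the induction. Conversely, if $M_k = kP \cap \ZZ^d$ holds for every $k$, then $kP \cap \ZZ^d = M_k = M_{k-1} + M_1 = (k-1)P \cap \ZZ^d + P \cap \ZZ^d$, recovering IDP (the opposite containment $(k-1)P + P \subseteq kP$ being trivial at the level of real dilates).

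The argument is essentially bookkeeping once the grading is exposed, so there is no genuine obstacle. The only point that warrants a moment of care is the forcing $t = k$ in the cone description: it relies crucially on every generator of $A$ having last coordinate $1$, which is precisely the effect of the homogenising factor $\times\{1\}$. Without this the cone would not split cleanly into degrees and the equivalence with IDP would have to be articulated differently.
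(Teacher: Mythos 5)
The paper does not prove this proposition; it is stated as a citation to Remark~0.1 of the Cox--Haase--Hibi--Higashitani reference, so there is no in-paper argument to compare against. That said, your proof is correct and supplies exactly the standard grading argument one expects: identify the degree-$k$ slice of $\NN A$ (with $A = (P\times\{1\})\cap\ZZ^{d+1}$) with the $k$-fold Minkowski sum $M_k$ of $P\cap\ZZ^d$, identify the degree-$k$ slice of $\RR_{\ge 0}(P\times\{1\})\cap\ZZ^{d+1}$ with $kP\cap\ZZ^d$, note the automatic inclusion $M_k\subseteq kP\cap\ZZ^d$ from convexity, and reduce the claimed set equality to the degree-by-degree statement $M_k = kP\cap\ZZ^d$, which is equivalent to the recursion defining IDP by the one-line induction $M_k = M_{k-1} + M_1$. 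One point you compress is the identification $\RR_{\ge 0}(P\times\{1\}) = \{t(p,1) : t\ge 0,\, p\in P\}$: in general $\RR_{\ge 0}S$ means the conical hull (all finite nonnegative combinations), and equating it with the union of rays through $S$ requires $S$ convex; this holds here because $P$ is a polytope, but it is worth saying, since without it the "matching last coordinates forces $t=k$" step would not be a complete description of the slice. With that remark inserted the argument is airtight.
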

\noindent This means especially that if all lattice points in $\conv(A_-)$ are column vectors in $A_-$ (which correspond to exponents of monomials in $\G$), i.e.~the set of column vectors of $A_-$ is $\conv(A_-)\cap~\ZZ^d$, the toric ideal $I_A$ will be Cohen-Macaulay if the polytope $P=\conv(A_-)$ is IDP.

   Hence when considering the question of if a toric ideal $I_A$ is Cohen-Macaulay in Section~\ref{sec:MainResults} we will instead seek to prove the stronger sufficient condition that the polytope $P=~{\rm conv}(A_-)$ is normal. 
We now recall two standard constructions in polyhedral geometry.
\begin{definition}
Let $P,Q\subset \RR^d$ be (lattice) polytopes. The {\em Minkowski sum} $P+Q$ is
$$P+Q:=\{p+q\in\RR^d\ |\ p\in P,\ q\in Q\}.$$
The {\em Cayley sum} $P*Q$ is the convex hull of $(P\times\{0\})\cup (Q\times\{1\})$ in $\RR^{d+1}$.
\end{definition}

In Section \ref{sec:MainResults} the notion of an {edge-unimodular} polytope will play a prominent role. Recall that a matrix $M\in\ZZ^{d\times n}$ is said to be \emph{unimodular} if all $d\times d$ minors are either $0$, $1$, or $-1$. A polytope $P$ is called {\em edge-unimodular} if there a unimodular matrix $M$ such that the edges of $P$ are parallel to the columns of $M$. In Section \ref{sec:MainResults} we employ Corollary \ref{thm: edge-unimodular is IDP} which is a direct consequence of the following result of Howard \cite{HOWARD2007,OWR}, see also Danilov and Koshevoy \cite{DANILOV2004}.

\begin{theorem}[Theorem 4.7~of~\cite{HOWARD2007},  cf.~\cite{OWR}]\label{thm: Minkowski sum edge-unimodular}
    Suppose that $M$ is a unimodular matrix and that $P$ and $Q$ are lattice polytopes with edges parallel to the columns of $M$, that is $P$ and $Q$ are both edge-unimodular with matrix $M$. Then
    \begin{equation}
    P\cap\ZZ^d+Q\cap\ZZ^d=(P+Q)\cap\ZZ^d.
    \end{equation}
\end{theorem}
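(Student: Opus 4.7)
The plan is to establish the nontrivial inclusion $(P+Q)\cap\ZZ^d\subseteq(P\cap\ZZ^d)+(Q\cap\ZZ^d)$, since the reverse containment is immediate from the definition of Minkowski sum. Fix $z\in(P+Q)\cap\ZZ^d$ and consider the fiber
\[
 F_z := \{(x,y)\in P\times Q : x+y=z\},
\]
a nonempty rational polytope. It suffices to produce a single lattice point $(p,q)\in F_z$, and since $F_z$ is a rational polytope, it is enough to show every vertex $(p,q)$ of $F_z$ has both coordinates integral.

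Fix such a vertex $(p,q)$ and let $F_P$, $F_Q$ be the minimal faces of $P$ and $Q$ containing $p$ and $q$, with $T_P, T_Q$ their tangent linear subspaces. By the edge-unimodularity hypothesis, each of $T_P$ and $T_Q$ is spanned by a subset of the columns $m_1,\ldots,m_n$ of $M$. The pivotal structural step is to show $T_P\cap T_Q=\{0\}$: any nonzero $v\in T_P\cap T_Q$ would provide a direction $(v,-v)$ that preserves $x+y=z$ and keeps the point inside $F_P\times F_Q\subseteq P\times Q$ for small perturbations, producing a nontrivial segment through $(p,q)$ inside $F_z$ and contradicting the vertex property.

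Consequently one may pick index sets $I,J\subseteq\{1,\ldots,n\}$ with $I\cap J=\emptyset$, $T_P=\mathrm{span}\{m_i : i\in I\}$, $T_Q=\mathrm{span}\{m_j : j\in J\}$, together with integer vertices $v_P$ of $F_P$ and $v_Q$ of $F_Q$ (which exist since $P,Q$ are lattice polytopes, and vertices of faces of lattice polytopes are lattice points). Writing
\[
 p - v_P = \sum_{i\in I} t_i\, m_i, \qquad q - v_Q = \sum_{j\in J} s_j\, m_j,
\]
integrality of $z-v_P-v_Q$ forces each $t_i$ and $s_j$ to be an integer: the columns $\{m_k\}_{k\in I\cup J}$ are linearly independent and form a submatrix of the unimodular matrix $M$, so an integer point in their span admits only integer coefficients. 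Hence $p,q\in\ZZ^d$, completing the argument.

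The main obstacle I anticipate is isolating the disjointness $I\cap J=\emptyset$ cleanly from the vertex condition on $F_z$, particularly because edge directions of $P$ and $Q$ are only specified up to sign by the columns of $M$; tracking how parallel edges at $p$ and $q$ can or cannot coexist requires care with tangent spaces of faces versus tangent cones of the ambient polytopes. Once this combinatorial setup is in place, the unimodularity of $M$ delivers the integrality of the coefficients $t_i, s_j$ essentially for free, which is the crisp payoff of the hypothesis.
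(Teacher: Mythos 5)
The paper does not supply a proof of this theorem; it is quoted from Howard \cite{HOWARD2007} (see also \cite{OWR,DANILOV2004}), so there is no internal proof to compare against. Evaluated on its own merits, your argument is correct and is, up to presentation, the standard fiber argument used to establish results of this kind. The structure is sound: reduce to showing that a vertex $(p,q)$ of the fiber $F_z=\{(x,y)\in P\times Q:\ x+y=z\}$ is integral; show the tangent spaces $T_P,T_Q$ of the minimal faces through $p$ and $q$ intersect trivially (else the direction $(v,-v)$ for $v\in T_P\cap T_Q\setminus\{0\}$ gives a segment in $F_z$ through $(p,q)$, using that $p,q$ lie in the relative interiors of $F_P,F_Q$); and then use unimodularity to deduce integrality of the coefficients expressing $p-v_P$ and $q-v_Q$.

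Two small remarks. First, the obstacle you flag at the end is not actually an obstacle: if some column $m_k$ is parallel to an edge of both $F_P$ and $F_Q$, then $m_k$ (as a vector, regardless of sign) lies in $T_P\cap T_Q$, directly contradicting triviality of the intersection, so the disjointness of the index sets is immediate once $T_P\cap T_Q=\{0\}$ is established; you should, however, state explicitly that $I$ and $J$ are chosen so that $\{m_i\}_{i\in I}$ and $\{m_j\}_{j\in J}$ are \emph{bases} of $T_P$ and $T_Q$ (not merely spanning sets), which is what makes the coefficients $t_i,s_j$ unique and the union $\{m_k\}_{k\in I\cup J}$ linearly independent. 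Second, the final step --- that an integer vector in the real span of linearly independent columns of a unimodular matrix has integer coordinates with respect to those columns --- uses that such columns extend to a $d\times d$ submatrix of determinant $\pm1$, which in turn requires $M$ to have rank $d$. With the paper's definition of unimodular (all maximal minors in $\{0,\pm1\}$) this is a genuine additional hypothesis, though it is tacitly satisfied in every application in the paper; it is worth making explicit.
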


From this theorem we immediately obtain the following result which tells us that to show the projective normality of a toric variety $X_A$ it is sufficient to show that the associated polytope $P=\conv(A)$ is edge-unimodular.
\begin{corollary}
\label{thm: edge-unimodular is IDP}
    If a polytope $P$ is edge-unimodular, then $P$ is IDP.
\end{corollary}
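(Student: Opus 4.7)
The plan is to deduce the corollary almost immediately from Theorem~\ref{thm: Minkowski sum edge-unimodular} by an induction on the dilation factor $k$, where the only additional observation required is that edge-unimodularity is preserved under positive integer scaling.

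More concretely, I would argue as follows. Fix a unimodular matrix $M$ witnessing that $P$ is edge-unimodular. The base case $k=1$ of the defining identity in Definition~\ref{def:NormalPolytope} is vacuous (or trivial, reading $0P$ as $\{0\}$). For the inductive step, assume that $(k-1)P \cap \ZZ^d = (k-2)P \cap \ZZ^d + P \cap \ZZ^d$ (or just treat $(k-1)P$ directly). Observe that the edges of the dilation $(k-1)P$ are obtained from the edges of $P$ by scaling by the positive integer $k-1$, so they remain parallel to the columns of $M$. Hence $(k-1)P$ is edge-unimodular with the same matrix $M$ as $P$. Theorem~\ref{thm: Minkowski sum edge-unimodular} then applies to the pair $(k-1)P$, $P$ and yields
\begin{equation}
(k-1)P \cap \ZZ^d + P \cap \ZZ^d = \bigl((k-1)P + P\bigr) \cap \ZZ^d = kP \cap \ZZ^d,
\end{equation}
which is exactly the IDP condition.

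There is essentially no obstacle: the content is entirely in Howard's Theorem~\ref{thm: Minkowski sum edge-unimodular}, which is quoted as a black box. The only thing to be careful about is that the hypothesis of Theorem~\ref{thm: Minkowski sum edge-unimodular} requires both summands to share the \emph{same} unimodular matrix $M$, but this is automatic here because $(k-1)P$ inherits $M$ from $P$ via the scaling argument above. Thus the corollary follows at once.
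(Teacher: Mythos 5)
Your proof is correct and essentially identical to the paper's: both observe that $(k-1)P$ is edge-unimodular with the same matrix $M$ (since dilation preserves edge directions) and then apply Theorem~\ref{thm: Minkowski sum edge-unimodular} to the pair $(k-1)P$, $P$. The induction scaffolding you add is harmless but unnecessary, as you yourself note, since the argument for each $k$ stands alone without any inductive hypothesis.
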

\begin{proof}
Suppose $P$ is edge-unimodular and let $Q=(k-1)P$. Since $Q$ is just a dilation of $P$, thus $Q$ is also edge-unimodular and the prerequisites of Theorem \ref{thm: Minkowski sum edge-unimodular} are met. Hence,
\begin{equation*}
    P\cap\ZZ^d+(k-1)P\cap\ZZ^d=kP\cap\ZZ^d. \qedhere
\end{equation*}
\end{proof}

To prove Theorem \ref{thm:MainThmMassless}, our main result in the massless case, we will need the following result by Tsuchiya \cite[Theorem 0.4]{tsuchiya2019cayley} (see also \cite{Haase2017}) where a complete description of IDP Cayley sums is given. 
\begin{proposition}[Theorem 0.4 of \cite{tsuchiya2019cayley}]\label{prop:TsuchiyaCayleyIDP}
    The Cayley sum $P*Q$ is IDP if and only if $P$ and $Q$ are IDP and also
    \begin{equation}\label{eq:CayeyIDP}
        (a_1P+a_2Q)\cap\ZZ^d=(a_1P\cap\ZZ^d)+(a_2Q\cap\ZZ^d)
    \end{equation}
    for any positive integers $a_1,a_2$.
\end{proposition}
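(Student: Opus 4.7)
The plan is to exploit the natural slicing of the Cayley sum $P*Q\subset\RR^{d+1}$ in its last coordinate. Since $P*Q$ lies between heights $0$ and $1$, its only lattice points are $(P\cap\ZZ^d)\times\{0\}$ and $(Q\cap\ZZ^d)\times\{1\}$. The structural identity I would record at the start is the slice description
\begin{equation*}
    k(P*Q)\cap \ZZ^{d+1} \;=\; \bigsqcup_{t=0}^{k}\bigl(((k-t)P+tQ)\cap \ZZ^d\bigr)\times\{t\}
\end{equation*}
for every integer $k\ge 1$. Combined with the standard reformulation of IDP as ``every lattice point of $kR$ is a sum of $k$ lattice points of $R$'', this is the only geometric input needed.

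For the forward direction, assume $P*Q$ is IDP. To recover IDP of $P$, lift any $x\in kP\cap\ZZ^d$ to $(x,0)\in k(P*Q)\cap\ZZ^{d+1}$; the heights of its $k$ summands must sum to $0$, forcing every summand to be of the form $(p_i,0)$ with $p_i\in P\cap\ZZ^d$, and $x=p_1+\cdots+p_k$ follows. The case of $Q$ is symmetric. For the Minkowski identity \eqref{eq:CayeyIDP}, fix positive integers $a_1,a_2$ and $x\in(a_1P+a_2Q)\cap\ZZ^d$; the slice description places $(x,a_2)$ in $(a_1+a_2)(P*Q)\cap\ZZ^{d+1}$. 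IDP yields a decomposition into $a_1+a_2$ lattice points of $P*Q$, and counting heights pins down exactly $a_1$ summands of type $(p_i,0)$ and $a_2$ of type $(q_j,1)$, so $x=\sum p_i+\sum q_j\in(a_1P\cap\ZZ^d)+(a_2Q\cap\ZZ^d)$. The reverse inclusion in \eqref{eq:CayeyIDP} is automatic from $a_i(R\cap\ZZ^d)\subset a_iR$.

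For the reverse direction, assume $P$ and $Q$ are IDP and that \eqref{eq:CayeyIDP} holds for all positive $a_1,a_2$. Take $(x,t)\in k(P*Q)\cap\ZZ^{d+1}$. The boundary cases $t=0$ and $t=k$ follow directly from IDP of $P$ and $Q$ respectively. For $1\le t\le k-1$ the slice description gives $x\in((k-t)P+tQ)\cap\ZZ^d$, and the hypothesis with $(a_1,a_2)=(k-t,t)$ splits $x=y+z$ with $y\in(k-t)P\cap\ZZ^d$ and $z\in tQ\cap\ZZ^d$. Iterating IDP of $P$ and $Q$ decomposes $y=p_1+\cdots+p_{k-t}$ and $z=q_1+\cdots+q_t$ into lattice points of $P$ and $Q$, and the lifts $(p_i,0)$ and $(q_j,1)$ assemble to a decomposition of $(x,t)$ into $k$ lattice points of $P*Q$.

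The argument is largely bookkeeping once the slice description is in hand; the main conceptual point is recognizing that the Minkowski-sum-of-lattice-points condition is exactly what allows IDP decompositions of the $P$-part and $Q$-part of a mixed slice to be glued together, while the extremal slices $t=0$ and $t=k$ force IDP of $P$ and $Q$ individually and explain why the hypothesis only needs to range over \emph{positive} $a_1,a_2$.
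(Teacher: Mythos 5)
The paper does not prove this proposition; it simply cites it as Theorem 0.4 of \cite{tsuchiya2019cayley} and uses it as a black box. So there is no ``paper's own proof'' to compare against, and what you have done is supply a self-contained proof of the cited result.

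Your argument is correct, and it is the natural one: the key structural fact is the slice description of $k(P*Q)$ at integer height $t$ as $\bigl((k-t)P+tQ\bigr)\times\{t\}$, together with the iterated form of IDP (every lattice point of $kR$ is a sum of $k$ lattice points of $R$). The bookkeeping you carry out --- heights of the $k$ summands must sum to $t$ and each is $0$ or $1$, so exactly $t$ of them sit at height $1$ --- correctly yields both directions. In the forward direction this forces the summands of $(x,0)$ to all lie in $P\times\{0\}$ (giving IDP of $P$), and the summands of $(x,a_2)$ in $(a_1+a_2)(P*Q)$ to split as $a_1$ points of $P$ and $a_2$ points of $Q$ (giving the Minkowski identity). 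In the reverse direction, equation~\eqref{eq:CayeyIDP} with $(a_1,a_2)=(k-t,t)$ lets you split a mixed-height slice point, and IDP of $P$ and $Q$ separately finishes it; you also correctly note that the extremal slices $t=0,k$ are handled by IDP of $P$, $Q$ alone, which is precisely why the hypothesis may be restricted to \emph{positive} $a_1,a_2$. This slicing argument is essentially the one in Tsuchiya's paper (which treats Cayley sums of $m$ polytopes; the statement quoted here is the case $m=2$), so you have reconstructed the standard proof rather than found a new route.
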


An important class of polytopes, which appear in Section \ref{sec:MainResults}, are the hypersimplices. 
\begin{definition}[Hypersimplex]\label{def:hypersimplex}
The hypersimplex $\Delta(d,k)\subset\mathbb{R}^d$ is the polytope
\begin{equation}
    \Delta(d,k)=\{(x_1,\ldots,x_d)\ |\ 0\le x_1,\ldots,x_d\le 1;\ x_1+\cdots+x_d=k\}.
\end{equation}
\end{definition}
In Section \ref{sec:MainResults} we will also employ several ideas from matroid theory, our main reference for these notions is the book \cite{Oxley}. Below we give several definitions and a theorem which will be of particular importance. 

Given two matroids $M_1,M_2$ on the same ground set $E$, we say that $M_1$ is a {\em quotient} of $M_2$ if every circuit of $M_2$ can be written as a union of circuits in $M_1$.
A pair of matroids $\{M_1,M_2\}$ on the same ground set $E$ form a \emph{flag matroid} if $M_1$ is a quotient of $M_2$. 
In the proof of our main result we will employ the following standard result which tells us that quotients are flipped by duality.
\begin{proposition}[Proposition 7.3.1 of \cite{Oxley}]\label{thm:DualMatridIsQuo}
    Let $M_1,M_2$ be two matroids on $E$, then $M_1$ is a quotient of $M_2$ if and only if $M_2^*$ is a quotient of $M_1^*$.
\end{proposition}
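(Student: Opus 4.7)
The plan is to exploit the equivalent characterization of matroid quotient via a common matroid extension: $M_1$ is a quotient of $M_2$ if and only if there exists a matroid $N$ on a disjoint union $E\sqcup X$ such that $M_2 = N\setminus X$ and $M_1 = N/X$. Combined with the well-known duality identities
\begin{equation*}
    (N\setminus X)^{*} = N^{*}/X,\qquad (N/X)^{*} = N^{*}\setminus X,
\end{equation*}
the proposition reduces to a short formal computation: matroid duality interchanges deletion with contraction, and therefore interchanges the roles of the two matroids in a quotient pair.

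Concretely, the first step would be to cite (rather than reprove) the equivalence between the circuit-union definition given in the excerpt and the extension characterization above; this is a standard result appearing earlier in Oxley. Granting this, suppose $M_1$ is a quotient of $M_2$ witnessed by a matroid $N$ on $E\sqcup X$. Dualising the two defining relations yields
\begin{equation*}
    M_1^{*}=(N/X)^{*}=N^{*}\setminus X,\qquad M_2^{*}=(N\setminus X)^{*}=N^{*}/X,
\end{equation*}
so the single matroid $N^{*}$ on $E\sqcup X$ exhibits $M_2^{*}$ as a contraction and $M_1^{*}$ as a deletion of a common extension. By the extension characterization read in the other direction, this is exactly the statement that $M_2^{*}$ is a quotient of $M_1^{*}$. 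The converse direction follows by the same argument applied to $(M_1^{*},M_2^{*})$ together with the involution $(M^{*})^{*}=M$.

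The main obstacle, I expect, is that the circuit-union definition used in the excerpt does not self-dualise on the nose — circuits of $M$ correspond to cocircuits, not circuits, of $M^{*}$ — so one genuinely needs a more symmetric characterisation to close the loop. The extension characterisation above is the slickest route; an alternative I would keep in reserve is the rank-function inequality $r_{M_2}(B)-r_{M_2}(A)\geq r_{M_1}(B)-r_{M_1}(A)$ for $A\subseteq B\subseteq E$, which, via $r_{M^{*}}(S)=|S|+r_{M}(E\setminus S)-r_{M}(E)$ together with the substitution $A=E\setminus B'$, $B=E\setminus A'$, transforms directly into the corresponding inequality for the duals with the roles of $M_1$ and $M_2$ swapped. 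Either route leaves the proof as essentially a duality bookkeeping exercise once the appropriate characterisation of quotient is in hand.
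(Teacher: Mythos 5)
The paper does not actually prove this proposition; it cites Oxley's Proposition~7.3.1 as a black box. Your proof is correct and is essentially the standard textbook argument: indeed, Oxley's own definition of quotient in Section~7.3 is the common-extension characterization you reduce to, so once you make that reduction the duality bookkeeping with $(N\setminus X)^{*}=N^{*}/X$ and $(N/X)^{*}=N^{*}\setminus X$ is exactly how Oxley proves it. Your preliminary step of passing from the circuit-union definition (which is what this paper states) to the extension characterization is the right move and is needed precisely because, as you note, circuits of $M$ dualize to cocircuits rather than circuits of $M^{*}$; the alternative via the rank-difference (strong-map) inequality together with $r_{M^{*}}(S)=|S|+r_{M}(E\setminus S)-r_{M}(E)$ is also a valid and equally short route.
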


Given a matroid $M$ we may define the associated {\em matroid polytope} $P_M$ to be the convex hull of the indicator vectors of all bases of $M$. We will also wish to associate a polytope to a flag matroid $\{M_1,M_2\}$. 

\begin{definition}\label{def:flag_matroid_polytope}
Let $\{M_1,M_2\}$ be a flag matroid, then the \emph{flag matroid polytope} is defined as the Minkowski sum of the constituent matroid polytopes: $P_{M_1}+P_{M_2}$.
\end{definition}

\section{Normality of Symanzik Polytopes}\label{sec:MainResults}
In this section we prove the main result, namely we show that the polytope associated to entirely massive or entirely massless Feynmann integrals is always IDP, and hence the desirable properties of the associated $A$-hypergeometric system described in Section \ref{sec:Intro} hold. Throughout this section $G=(V,E)$ will be a 1PI Feynman graph as described in Section \ref{sec:Intro}.

\subsection{Massive Case}\label{subsec:massive}
Let $G$ be a 1PI Feynman graph with all internal edges massive, i.e. $m_e\neq 0$ for all $e\in E$. We separate the $\F$-polynomial \eqref{eq:F_Symanzik} as $\F=\F_m+\F_0$ where $\F_0$ is defined by the two-forests and $\F_m$ is given by $\F_m=\U\cdot \sum m_e^2x_e$ with $\U$ as in \eqref{eq:U_Symanzik}. The non-vanishing masses guarantees that every monomial in $\F_0$ will be obtained in $\F_m$, i.e.~writing ${\rm span}(F)$ for the $k$-vector space span of the monomials in a polynomial $F$ over a field $k$ we have $\mathrm{span}(\F_m)\supseteq\mathrm{span}(\F_0)$. To see this, note that every monomial in $\F_0$ can be written on the the form $ux_j$ where $u$ is a monomial in $\U$ and $x_j$ corresponds to one of the edges in the spanning tree defining $u$. If all masses are non-zero, then every $x_j$ will be in the sum $\sum m_e^2x_e$ and thus every monomial in $\F_0$ will be in $\F_m$.

This means that the Newton polytope $P_F:=\newt(\F)$ of $\F$ satisfies 
\begin{equation}\label{eq:P_FandP_U}
    P_F=\newt(\F_m)=P_U+\Delta_E,
\end{equation}
where $P_U:=\newt(\U)$ and $\Delta_E=\Delta(|E|,1)=\conv(e_1,\ldots,e_{|E|})$ is the $(|E|-1)$-dimensional standard simplex in $\RR^{|E|}$; note that the final equality in \eqref{eq:P_FandP_U} follows from the definition of $\F_m=\U\cdot \sum m_e^2x_e$. Let $\mathcal{G}=\U+\F$ and let $\widetilde{\Delta}_E=\conv(0,e_1,\ldots,e_{|E|})$ be the standard simplex with 0 added as a vertex, then $P_G:=\newt(\G)=\newt(\U+\F)$ can be expressed as the sum
\begin{equation}
    P_G=P_U+\widetilde{\Delta}_E.\label{eq:P_G_Def}
\end{equation}
Our goal is then to prove that the polytope $P_G$ is edge-unimodular. \begin{theorem}[Main Theorem I]\label{thm:MainThm}
  Let $P_G$ be the polytope defined in \eqref{eq:P_G_Def}; then the polytope    $P_G$ is edge-unimodular, and hence is IDP. 
\end{theorem}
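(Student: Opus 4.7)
The plan is to realize $P_G$ as a Minkowski sum of two lattice polytopes whose edge directions all lie in the columns of a single unimodular matrix, and then invoke Corollary~\ref{thm: edge-unimodular is IDP}.

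First, I would identify $P_U=\newt(\U)$ as the matroid polytope of the cographic matroid $M^{*}(G)$: the exponent vectors appearing in $\U$ are exactly the indicator vectors $\mathbf{1}_{E\setminus T}$ as $T$ ranges over the spanning trees of $G$, which are precisely the indicator vectors of the bases of $M^{*}(G)$. By the classical Gelfand--Goresky--MacPherson--Serganova characterization of matroid polytopes, every edge of $P_U$ is parallel to some $e_i-e_j$ with $i\neq j$. Next, the simplex $\widetilde{\Delta}_E=\conv(0,e_1,\ldots,e_{|E|})$ has edges joining $0$ to each $e_i$ and joining each pair $e_i,e_j$, so its edge directions lie in $\mathcal{S}:=\{e_i:i\in E\}\cup\{e_i-e_j:i\neq j\}$. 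Let $M$ be the integer matrix whose columns are $\mathcal{S}$. Then $M$ is the row-deletion of the vertex--edge incidence matrix of the directed graph on $\{0\}\cup E$ with arcs $0\to i$ and $i\to j$, so $M$ is totally unimodular, and in particular unimodular in the sense of Section~\ref{sec:background}. Note that the edge directions of $P_U$ are also among the columns of $M$.

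Second, I would invoke the standard polyhedral fact that every edge direction of a Minkowski sum $P+Q$ is parallel to an edge direction of $P$ or of $Q$ (since any face of $P+Q$ decomposes as $F+G$ with $F$ a face of $P$ and $G$ a face of $Q$). Applied to $P_G=P_U+\widetilde{\Delta}_E$, this implies every edge direction of $P_G$ is parallel to a column of $M$, so $P_G$ is edge-unimodular with respect to $M$. Corollary~\ref{thm: edge-unimodular is IDP} then yields that $P_G$ is IDP, as required.

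The only real obstacle is combinatorial bookkeeping: one must correctly identify $P_U$ with the cographic matroid polytope and verify that $\mathcal{S}$ is a unimodular configuration. Neither step is conceptually hard, but both are essential so that a single matrix $M$ simultaneously witnesses edge-unimodularity of both summands; once $M$ is in hand the Minkowski-sum argument and the cited corollary finish the proof immediately.
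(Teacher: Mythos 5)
Your proof follows the same route as the paper: realize $P_G$ as the Minkowski sum $P_U+\widetilde{\Delta}_E$, observe via GGMS that $P_U$ has edge directions in $\{e_i-e_j\}$ while $\widetilde{\Delta}_E$ adds only the directions $\{e_i\}$, note that edges of a Minkowski sum come from edges of the summands, and apply Corollary~\ref{thm: edge-unimodular is IDP}. The only difference is cosmetic — you exhibit the unimodular matrix explicitly as a row-deleted directed incidence matrix (a network matrix), whereas the paper appeals to the standard fact that $(I\mid A)$ is totally unimodular when $A$ is; both are correct witnesses for the same configuration.
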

\begin{proof}
Note that we can construct a co-graphic matroid from $\U$ by taking the matroid whose bases are the complements of the spanning trees of $G$; $P_U$ is the matroid polytope of this matroid. By a classical result of Gelfand, Goresky, MacPherson and Serganova \cite[Theorem 4.1]{GELFAND1987} the edges of a matroid polytope are parallel to $e_i-e_j$, $i\neq j$, where $e_k$ is the $k^{\rm th}$ standard basis in $\RR^{|E|}$. Hence $P_U$ is an edge-unimodular polytope. 

The edges of $\widetilde{\Delta}_E$ are clearly either parallel to $e_i-e_j$ or $e_i$.

The Minkowski sum $P_G=P_U+\widetilde{\Delta}_E$ contains two types of edges: edges parallel to edges of $P_U$ and edges parallel to edges of $\widetilde{\Delta}_E$. This means that $P_G$ has edges in the totally unimodular matrix matrix $(I|A)$ where $I$ is the $(|E|\times |E|)$-dimensional identity matrix and the columns of $A$ consist of vectors which are the columns of some totally unimodular matrix. Hence $P_G$ is edge-unimodular and, by Corollary \ref{thm: edge-unimodular is IDP}, is IDP.
\end{proof}

\begin{remark}
\label{remark:IDP_Poly_is_normal_A}
Lemma \ref{lemma:appendix} in the Appendix below shows that the lattice points in $P_G$ are the same as the columns of $A_-$, (i.e. the exponent vectors of $\G$). Thus $P_G$ being IDP is equivalent to the semi-group $\NN A=\NN(A_-\times\{1\})$ being normal, see Proposition \ref{prop:IDP_Normal_SemiGroup} and the surrounding discussion, which also implies that the toric ideal $I_A$ is Cohen-Macaulay by Hochster's theorem.
\end{remark}

The Symanzik polynomials $\U$ and $\F$ are not only relevant in the Lee-Pomeransky representation but are also used in other parametric representations of Feynman integrals. As observed in the proof of Theorem \ref{thm:MainThm} $P_U$ is a matroid polytope, here we prove a similar result for $P_F$.

\begin{lemma}
 Let $P_F$ be as in \eqref{eq:P_FandP_U}. Then $P_F$ is a flag matroid polytope.
 \end{lemma}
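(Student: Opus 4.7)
The plan is to exhibit two explicit matroids on the edge set $E$ whose matroid polytopes Minkowski-sum to $P_F$, and then verify the quotient relation between them. The natural candidates, suggested by the decomposition $P_F = P_U + \Delta_E$ from \eqref{eq:P_FandP_U}, are:
\begin{itemize}
    \item $M_2 = M^*(G)$, the co-graphic matroid of $G$, whose bases are the complements of spanning trees. As already observed in the proof of Theorem \ref{thm:MainThm}, its matroid polytope equals $P_U$.
    \item $M_1 = U_{1,|E|}$, the uniform rank-$1$ matroid on $E$. Its bases are the singletons $\{e\}$, so its matroid polytope is the convex hull of the standard basis vectors, namely $\Delta_E$.
\end{itemize}
Thus $P_F = P_{M_1} + P_{M_2}$, and by Definition \ref{def:flag_matroid_polytope} it suffices to show that $\{M_1, M_2\}$ is a flag matroid, i.e.\ that $M_1$ is a quotient of $M_2$.

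To verify the quotient property I would use the circuit characterization stated in the paper: $M_1$ is a quotient of $M_2$ precisely when every circuit of $M_2$ can be written as a union of circuits of $M_1$. The circuits of $U_{1,|E|}$ are exactly the $2$-element subsets of $E$, while the circuits of $M^*(G)$ are by definition the cocircuits of $M(G)$, that is, the bonds (minimal edge cuts) of $G$. Since $G$ is assumed to be 1PI, i.e.\ $2$-edge-connected, no bond can consist of a single edge; so every circuit of $M^*(G)$ has at least two elements and therefore trivially decomposes as a union of $2$-element subsets. This establishes that $U_{1,|E|}$ is a quotient of $M^*(G)$, and the lemma follows.

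The argument is short, and the main obstacle is really just alignment of conventions: one must be careful to identify $P_U$ with the matroid polytope of the \emph{co}-graphic (not graphic) matroid, and to pick the right direction of the quotient (since, by Proposition \ref{thm:DualMatridIsQuo}, the dual statement would instead require every edge of $G$ to lie on a cycle, which is again equivalent to the 1PI assumption and so provides a useful cross-check). The essential graph-theoretic content is precisely this 2-edge-connectedness, which is exactly what guarantees that bonds have size at least two.
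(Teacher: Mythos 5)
Your proof is correct, and it takes a cleaner, more direct route than the paper. Both arguments agree on the setup: $P_F = P_U + \Delta_E$ with $P_U$ the matroid polytope of the co-graphic matroid $M^*(G)$ and $\Delta_E$ the matroid polytope of $U_{1,|E|}$, so the question reduces to showing $U_{1,|E|}$ is a quotient of $M^*(G)$. The paper establishes this indirectly: it passes to the dual matroids $M(G)$ and $U_{|E|-1,|E|}$, observes that the single circuit $E$ of $U_{|E|-1,|E|}$ is the union of the cycles of $G$ because 1PI means every edge lies on a cycle, and then invokes Proposition~\ref{thm:DualMatridIsQuo} to dualize back. You instead verify the quotient condition directly: the circuits of $M^*(G)$ are the bonds of $G$, 2-edge-connectedness guarantees every bond has size at least two, and any set of size $\ge 2$ is a union of its $2$-element subsets, i.e.\ a union of circuits of $U_{1,|E|}$. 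Your version avoids the appeal to the duality proposition entirely (while noting, correctly, that the dual check is the same fact in disguise), so it is slightly shorter and makes the role of the 1PI hypothesis more transparent. One small discrepancy with the paper, not with correctness: you identify $\Delta_E$ as the matroid polytope of $U_{1,|E|}$ directly, whereas the paper reaches the same identification via the co-graphic matroid of the cycle graph $C(|E|)$; yours is the simpler description.
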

 \begin{proof}
Let $C(|E|)$ be the cycle graph on $|E|$ vertices, i.e.~the graph with $|E|$ vertices connected in a closed chain with $|E|$ edges.  Let $M_{C(|E|)}$ be the associated graphic matroid, that is the matroid whose independent sets are given by the forests of  $C(|E|)$. Then $\Delta_E$ is the matroid polytope of the co-graphic matroid $M_{C(|E|)}^*$. Note that this is a matroid of rank one and whose independent sets are $\mathcal{I}=\{\emptyset,\{1\},\{2\},\ldots,\{|E|\}\}$, thus we see that $M_{C(|E|)}^*=U_{1,|E|}$ where $U_{k,n}$ is the uniform matroid of rank $k$ on $\{1, \dots, n \}$. Let $M_U^*$ be the matroid with matroid polytope $P_U$, this is a matroid on the same ground set $E$ as $U_{1,|E|}$ but has rank $L$ where $L$ is the number of independent cycles in the underlying Feynman graph. 

It is a little easier if we proceed with the dual matroids $M_U$ (the graphical matroid on the underlying Feynman graph) and $U_{|E|-1,|E|}$. 

Note that $U_{|E|-1,|E|}$ only contains one cycle: $\{1,\ldots,|E|\}$. Now, since we have assumed that the underlying Feynman graph is 1PI
then every element in $E$ will be in some cycle of $M_U$. Thus the union of the cycles in $M_U$ will be the cycle in $U_{|E|-1,|E|}$. This means that $M_U$ is a quotient of $U_{|E|-1,|E|}$.

We will now employ Proposition \ref{thm:DualMatridIsQuo}  which tells us that quotients are flipped by duality;
in particular Proposition \ref{thm:DualMatridIsQuo} implies that $U_{1,|E|}$ is a quotient of $M_U^*$ and thus $\{U_{1,|E|},M_U^*\}$ is a flag matroid. 
Since $P_F=P_U+\Delta_E$, where $P_U$, respectively $\Delta_E$, are the matroid polytopes of $M_U^*$, respectively $U_{1,|E|}$, and $\{U_{1,|E|},M_U^*\}$ is a flag matroid, we conclude that $P_F$ is a flag matroid polytope.
 \end{proof}
 
From \cite[Theorem 3.1]{Brovik1997} we have that the edges of a flag matroid polytope are contained in the set of edges of a totally unimodular matrix. This gives us the following corollary. 
\begin{corollary}\label{cor:P_F_EdgeUniMod}
 Let $P_F$ be as in \eqref{eq:P_FandP_U}. Then the edges of $P_F$ are parallel to the columns of a unimodular matrix. 
\end{corollary}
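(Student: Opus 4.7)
The plan is to derive this as a direct application of the theorem of Borovik, Gelfand and White cited in the paper. The preceding lemma has just established that $P_F$ is a flag matroid polytope, specifically the Minkowski sum $P_U + \Delta_E$ corresponding to the flag matroid $\{U_{1,|E|}, M_U^*\}$. Since \cite[Theorem 3.1]{Brovik1997} asserts that the edges of any flag matroid polytope lie among the edges of a totally unimodular matrix, invoking that statement here immediately yields the conclusion. So at the level of structure, the proof should be a single sentence: cite the flag matroid polytope property, invoke the Borovik--Gelfand--White result, done.

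For a self-contained alternative that avoids the flag matroid machinery, I would exploit the explicit Minkowski decomposition $P_F = P_U + \Delta_E$ directly. The matroid polytope $P_U$ has edges parallel to $e_i - e_j$ by the Gelfand--Goresky--MacPherson--Serganova theorem cited in the proof of Theorem \ref{thm:MainThm}. The standard simplex $\Delta_E = \conv(e_1, \ldots, e_{|E|})$ also has all its edges parallel to differences $e_i - e_j$ of standard basis vectors. Since the edges of a Minkowski sum of two polytopes are parallel to edges of one of the summands (a general fact about Minkowski sums of lattice polytopes), every edge of $P_F$ is parallel to some $e_i - e_j$. The matrix whose columns are all vectors of the form $e_i - e_j$ is totally unimodular (it is the incidence matrix of an oriented complete graph, a standard example), which gives the conclusion directly.

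The only subtlety to handle carefully is the statement about edges of Minkowski sums — it needs to be phrased correctly: every edge direction of $P + Q$ is an edge direction of $P$ or of $Q$. This is standard but worth a one-line justification, perhaps by noting that each face of $P+Q$ decomposes as a Minkowski sum of faces of $P$ and $Q$ of total dimension equal to the dimension of the face in question, so a $1$-dimensional face arises from a $1$-dimensional face of one summand plus a vertex of the other. I do not anticipate any real obstacle here; the work has already been done in establishing the flag matroid structure, and the corollary is a clean harvest of that structure.
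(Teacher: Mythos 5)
Your first paragraph reproduces the paper's proof exactly: the preceding lemma identifies $P_F$ as the flag matroid polytope of $\{U_{1,|E|},M_U^*\}$, and \cite[Theorem 3.1]{Brovik1997} then gives edge-unimodularity immediately. So the primary route is a match.

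Your second paragraph is a genuinely different and more elementary argument, and it is essentially sound. It sidesteps the flag matroid machinery entirely by working directly with the decomposition $P_F = P_U + \Delta_E$: $P_U$ is a matroid polytope so has edge directions $e_i - e_j$ by Gelfand--Goresky--MacPherson--Serganova, $\Delta_E$ has the same edge directions, and every edge direction of a Minkowski sum is an edge direction of one of the summands. This is in fact the same style of reasoning the paper already employs in the proof of Theorem~\ref{thm:MainThm} for $P_G = P_U + \widetilde{\Delta}_E$, so your alternative harmonizes well with the surrounding text. What the flag matroid route buys is a structural characterization of $P_F$ (useful in its own right) and a single clean citation; what your elementary route buys is that it removes the dependence on Borovik--Gelfand--White and on the quotient/duality lemma altogether, so it is genuinely shorter if the corollary is the only goal. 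One small imprecision in your justification of the Minkowski-sum step: it is not true that each face of $P+Q$ decomposes as $F_P + F_Q$ with $\dim F_P + \dim F_Q$ \emph{equal} to the dimension of the face; in general one only has $\dim(F_P+F_Q)\le \dim F_P + \dim F_Q$, and two parallel edges can sum to an edge. The correct argument is that an edge $F$ of $P+Q$ decomposes uniquely as $F_P+F_Q$ with $F_P$, $F_Q$ faces of $P$, $Q$ each contained in a translate of the line spanned by $F$, so each is a vertex or an edge parallel to $F$, and at least one must be an edge; hence the direction of $F$ is an edge direction of $P$ or of $Q$. With that repair the alternative is fully rigorous.
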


\subsection{Massless Case}\label{subsec:massless}
If all internal edges of a Feynman graph correspond to massless particles, then the $\mathcal{F}$-polynomial \eqref{eq:F_Symanzik} consists only of the sum over spanning 2-forests, $\F=\F_0$, while the $\mathcal{U}$-polynomial \eqref{eq:U_Symanzik} is independent of the internal masses. In order for $x_e$ to be included in a term of $\mathcal{U}$ or $\mathcal{F}$, the corresponding edge $e\in E$ must have been removed. Since an edge can only be removed once, this means that $x_e$ can show up at most once in each term of $\mathcal{U}$ or $\mathcal{F}$. In particular this means that the vertices of $\mathrm{Newt}(\mathcal{U})$ and $\mathrm{Newt}(\mathcal{F})$ are vectors with elements in $\{0,1\}$.

For a Feynman graph with $|E|$ edges and $L$ independent loops, it follows from their definition that $\mathcal{U}$ and $\mathcal{F}$ are homogeneous of degree $L$ and $L+1$ respectively. This in particular means that their Newton polytopes are contained in hyperplanes:
\begin{align}
    \mathrm{Newt}(\mathcal{U})&\subset\{(y_1,\ldots,y_E)\in\mathbb{R}^{|E|}\ |\ y_1+\cdots +y_E=L\},\\
    \mathrm{Newt}(\mathcal{F})&\subset\{(y_1,\ldots,y_E)\in\mathbb{R}^{|E|}\ |\ y_1+\cdots +y_E=L+1\}.
\end{align}
We noted above that the vertices of the Newton polytopes are vectors built of zeros and ones, this together with the fact the polytopes are contained in hyperplanes yields
\begin{align*}
    \mathrm{Newt}(\mathcal{U})\subseteq \Delta(E,L)\ \mathrm{and}\ 
    \mathrm{Newt}(\mathcal{F})\subseteq \Delta(E,L+1),
\end{align*}
i.e.~the Newton polytopes are subsets of hypersimplices (Definition \ref{def:hypersimplex}). Moreover, the fact that $P_U=\newt(\U)$ and $P_{F_0}=\newt(\F_0)$ are in different parallel hyperplanes (which are isomorphic copies of $\RR^{|E|-1}$) means that $P_G$ is their Cayley sum:
\begin{equation}\label{eq:PGasCayleySum}
    P_G=P_U*P_{F_0}.
\end{equation}

For a Feynman graph $G=(V,E)$ with $m_e=0$ for all edges and with all vertices connected to an off-shell external momenta, i.e.~$ p_v^2\neq0,\ v\in V=V_{\rm ext}$, we have the following analog of Theorem \ref{thm:MainThm}.
\begin{theorem}[Main Theorem II]\label{thm:MainThmMassless}
  Let $G=(V,E)$ be a Feynman graph with $m_e=0$ for all $e\in E$ and $V_{\rm ext}=V$, and let $\U$ and $\F_0$ be as above. Then the polytope $P_G=\newt(\U+\F_0)$ is IDP.
\end{theorem}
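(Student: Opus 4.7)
The strategy is to invoke Tsuchiya's characterisation of IDP Cayley sums, Proposition~\ref{prop:TsuchiyaCayleyIDP}, applied to the decomposition $P_G = P_U * P_{F_0}$ recorded in equation~\eqref{eq:PGasCayleySum}. This reduces the theorem to three claims: (i) $P_U$ is IDP; (ii) $P_{F_0}$ is IDP; and (iii) for all positive integers $a_1, a_2$,
\begin{equation*}
(a_1 P_U + a_2 P_{F_0}) \cap \ZZ^{|E|} = \bigl(a_1 P_U \cap \ZZ^{|E|}\bigr) + \bigl(a_2 P_{F_0} \cap \ZZ^{|E|}\bigr).
\end{equation*}

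Property (i) is obtained exactly as in the proof of Theorem~\ref{thm:MainThm}: $P_U$ is the matroid polytope of the co-graphic matroid of $G$, so by the result of Gelfand--Goresky--MacPherson--Serganova cited there, its edges are parallel to vectors $e_i - e_j$; it is therefore edge-unimodular and IDP by Corollary~\ref{thm: edge-unimodular is IDP}.

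For (ii), the plan is to realise $P_{F_0}$ as a matroid polytope. The spanning 2-forests of $G$ are precisely the independent sets of size $|V|-2$ of the graphic matroid $M_G$; these are the bases of the rank-$(|V|-2)$ truncation $T(M_G)$, so the complements of spanning 2-forests are the bases of the dual matroid $T(M_G)^*$. Under the hypotheses $V = V_{\mathrm{ext}}$ and $p_v^2 \neq 0$ for every $v$, together with generic external kinematics, each spanning 2-forest $F=(T,T')$ contributes a nonzero monomial to $\F_0$: for $|T|=1$ the coefficient equals $p_v^2 \neq 0$ by hypothesis; for $|T|=|V|-1$, say $T = V\setminus\{v_0\}$, momentum conservation yields $|p(F)|^2 = p_{v_0}^2 \neq 0$; and for intermediate $|T|$ the quantity $|\sum_{v\in T} p_v|^2$ is an unconstrained quadratic combination of the external momenta and hence nonzero under genericity. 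Thus $P_{F_0}$ equals the full matroid polytope of $T(M_G)^*$, its edges are parallel to $e_i - e_j$, and it is again edge-unimodular and IDP by Corollary~\ref{thm: edge-unimodular is IDP}.

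For (iii), since $P_U$ and $P_{F_0}$ are both edge-unimodular with respect to the same unimodular matrix (columns $e_i - e_j$), so are the positive dilations $a_1 P_U$ and $a_2 P_{F_0}$; Howard's Theorem~\ref{thm: Minkowski sum edge-unimodular} then gives the identity in (iii) immediately. Combining (i), (ii), (iii) with Proposition~\ref{prop:TsuchiyaCayleyIDP} concludes the argument. The main obstacle is step (ii), specifically the verification that $P_{F_0}$ is the \emph{full} matroid polytope of $T(M_G)^*$ rather than a proper subpolytope; the hypothesis that every vertex is connected to an off-shell external leg is exactly the kinematic input that excludes vanishing 2-forest coefficients, but making this rigorous for 2-forests with $2 \le |T| \le |V|-2$ may warrant a short combinatorial or genericity lemma.
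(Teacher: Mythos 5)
Your proof follows the same overall architecture as the paper's: the Cayley-sum decomposition $P_G = P_U * P_{F_0}$, Tsuchiya's criterion (Proposition~\ref{prop:TsuchiyaCayleyIDP}), edge-unimodularity of $P_U$ from the matroid polytope structure, and Howard's theorem to verify the mixed lattice-point condition. The one place you genuinely diverge is in establishing that $P_{F_0}$ is a matroid polytope. The paper's Lemma~\ref{lem:P_F0matroid} proves this from scratch by verifying the basis-exchange axiom \textbf{(B2)} for the indicator vectors of complements of spanning 2-forests, via a direct cycle-counting argument. You instead \emph{identify} the matroid: the spanning 2-forests are exactly the size-$(|V|-2)$ independent sets of the graphic matroid $M_G$, hence the bases of its rank-$(|V|-2)$ truncation, and $P_{F_0}$ is the matroid polytope of the dual of that truncation. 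This identification is correct, and it is arguably cleaner and more conceptual than the paper's axiom-chase; both routes deliver the same conclusion (edges parallel to $e_i - e_j$, via \cite[Theorem~4.1]{GELFAND1987}). The concern you flag about whether $P_{F_0}$ is the \emph{full} matroid polytope --- i.e.\ whether every spanning 2-forest $F$ contributes a nonzero coefficient $|p(F)|^2$ --- is real, but it afflicts the paper's Lemma~\ref{lem:P_F0matroid} in exactly the same way: that lemma simply equates ``the set of all spanning 2-forests'' with ``the set of exponent vectors appearing in $\F_0$.'' Your observation that the off-shell hypothesis $p_v^2 \neq 0$ directly handles only $|T| \in \{1, |V|-1\}$ is accurate; the intermediate cases are quietly resolved by the paper's standing kinematic convention in Section~\ref{sec:Intro} that one works in the Euclidean region where $p(F)^2 < 0$ for every spanning 2-forest $F$, which is the genericity you are asking for. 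So both proofs rest on the same implicit positivity input, and your version simply makes the dependence visible rather than introducing a gap.
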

In light of \eqref{eq:PGasCayleySum} we will apply Proposition \ref{prop:TsuchiyaCayleyIDP} to prove that the Cayley sum $P_G$ is IDP, hence proving  Theorem \ref{thm:MainThmMassless}. To employ Proposition \ref{prop:TsuchiyaCayleyIDP}  we need to show three things:
\begin{itemize}
    \item[\textbf{(i)}] $P_U$ is edge-unimodular (with respect to the unimodular matrix $M$) and hence IDP. As already discussed, this is clear since $P_U$ is a matroid polytope (see the beginning of the proof of Theorem \ref{thm:MainThm}).
    \item[\textbf{(ii)}] $P_{F_0}$ is edge-unimodular (with respect to same unimodular matrix $M$ as in {\bf (i)}) and hence IDP, this is considered in Lemma \ref{lem:P_F0matroid}.
    \item[\textbf{(iii)}] That equation \eqref{eq:CayeyIDP} holds for the pair $P_U$ and $P_{F_0}$, this is considered in Lemma \ref{lemma: cayleyIDP} (keeping in mind $P_{U}$ and $P_{F_0}$ are both edge-unimodular with the same $M$).
\end{itemize}

We now consider {\bf (ii)} above. For each subgraph ${\bf g}\subset G=(V,E)$ we associate the 0/1 vector in $\RR^{|E|}$ indexed by the edges removed from $G$ to get ${\bf g}$, this association is clearly bijective. Given a 0/1 vector $w$ in $\RR^{|E|}$ we will write ${\bf g}_w$ to denote the corresponding subgraph of $G$ obtained by removing the edges corresponding to entries in $w$ with coordinate zero.  
\begin{lemma}\label{lem:P_F0matroid} Let $F_0$ be the set of all spanning two-forest where we view the elements in $F_0$ as 0/1 vectors in $\RR^{|E|}$, i.e.~$F_0$ is the the set of exponent vectors of monomials appearing in $\F_0$, the part of $\F$ in \eqref{eq:F_Symanzik} consisting only of the sum over spanning 2-forests. Then $F_0$ is a set of bases of a matroid. Further the column matrix of the edges of the polytope $P_{F_0}=\conv(F_0)$ forms a totally unimodular matrix. 
\end{lemma}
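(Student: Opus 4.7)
The plan is to exhibit a specific matroid whose bases are exactly the elements of $F_0$, after which the claim about edges being columns of a totally unimodular matrix reduces to the same Gelfand-Goresky-MacPherson-Serganova result \cite{GELFAND1987} used in the proof of Theorem \ref{thm:MainThm}. The key observation is that spanning two-forests of $G$ arise naturally as the bases of the truncation of the graphic matroid $M_G$, so that the complementary removed-edge vectors making up $F_0$ are bases of its dual.

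Concretely, I would let $M_G$ denote the graphic matroid of $G$, whose independent sets are the forests of $G$; since $G$ is connected on $n=|V|$ vertices, $M_G$ has rank $n-1$. A spanning two-forest is a spanning subgraph that is a forest with exactly two connected components, hence has $n-2$ edges; conversely, since the number of components of a forest equals (vertices) minus (edges), any forest of $G$ with $n-2$ edges is a spanning two-forest. Thus the spanning two-forests are precisely the independent sets of $M_G$ of size $r(M_G)-1$, which by definition are the bases of the truncation $T(M_G)$. The complement in $E$ of a basis of $T(M_G)$ is a basis of the dual matroid $T(M_G)^*$, and such complements are exactly the $0/1$ indicator vectors of removed edges, i.e.~the elements of $F_0$. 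Consequently $F_0$ is the set of bases of the matroid $T(M_G)^*$, and in particular $P_{F_0}=\conv(F_0)$ is a matroid polytope.

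To close, I would invoke Theorem~4.1 of \cite{GELFAND1987}: the edges of any matroid polytope are parallel to vectors of the form $e_i-e_j$ with $i\ne j$. The set of such differences is contained in the columns of the signed vertex-edge incidence matrix of the complete directed graph on $|E|$ vertices, a standard example of a totally unimodular matrix, so the edge directions of $P_{F_0}$ are parallel to columns of a totally unimodular matrix. I do not foresee a significant obstacle once the matroid $T(M_G)^*$ has been identified; both halves of the lemma are then immediate. The only step requiring any care is the equivalence between forests with $n-2$ edges and spanning two-forests, which follows at once from the component-counting identity used above, and the minor check that truncation (and hence its dual) is a matroid, which is a standard fact in matroid theory.
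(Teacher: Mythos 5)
Your proof is correct, and it takes a genuinely different route from the paper's. The paper verifies the basis-exchange axiom \textbf{(B2)} directly by a hands-on graph-theoretic case analysis on whether removing one more edge from a spanning two-forest leaves a tree or a two-component subgraph containing a cycle. You instead name the matroid outright: spanning two-forests of a connected graph on $n$ vertices are exactly the forests with $n-2 = r(M_G)-1$ edges, hence the bases of the truncation $T(M_G)$, and so the complementary edge sets recorded by $F_0$ are the bases of the dual $T(M_G)^*$. Both proofs then finish identically by citing \cite[Theorem 4.1]{GELFAND1987} to get edge directions of the form $e_i - e_j$ and hence total unimodularity. What your approach buys is conceptual clarity and economy (it replaces the exchange-property verification with two standard matroid operations), and it makes the matroid structure of $P_{F_0}$ explicit, which is potentially useful elsewhere. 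What the paper's approach buys is self-containedness: it requires only the bare definition of a matroid via basis exchange, and the case analysis in step (b) makes the graph-theoretic mechanism visible. Two minor points you should make explicit if this were to be written up: the truncation argument uses that $G$ is connected (so $r(M_G)=n-1$), which holds by the standing 1PI assumption; and, like the paper, you implicitly identify $F_0$ with all spanning two-forests rather than worrying about possible accidental vanishing of some kinematic coefficient $|p(F)|^2$ — this matches the lemma as stated, so it is not a gap, but is worth a sentence of acknowledgment.
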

\begin{proof} Recall that a finite non-empty set $B\subset\ZZ_{\geq0}^n$ is a \emph{base} of a matroid if the following two properties hold:
\begin{itemize}[leftmargin=0.5in]
     \item[\bf{(B1)}] all $u\in B$ have the same norm,
     \item[\bf{(B2)}] if $u,v\in B$ with $u_i>v_i$, then there exists $j\in\{1, \dots, n \}$ with $u_j<v_j$ such that $u-e_i+e_j\in B$, where $e_\ell$ denotes the $\ell^{th}$ standard basis vector.
 \end{itemize} We now show these two properties hold for the set of exponent vectors of $\F_0$; for a vector $u\in \ZZ_{\geq0}^n$ we will use the norm $|u|=u_1+\cdots+u_n$. \\
\textbf{(B1)} The polynomial $\F_0$ is homogeneous of degree $L+1$, where $L$ is the number of independent cycles in $G$, so every $u\in F_0$ satisfies $|u|=L+1$.\\
\textbf{(B2)} Assume $u$ and $v$ are two different elements in $F_0$ such that $u_i>v_i$ for some $i$. Then the graph ${\bf g}_{u-e_i}$ corresponding to the 0/1 vector $u-e_i$ can be one of two types of graphs: 
(a) a spanning tree or (b) a graph with two components, one a tree and the other containing one and only one cycle.
\begin{enumerate}[(a)]
    \item By assumption $u_j<v_j$ for some $j$, since ${\bf g}_{u-e_i}$ is a spanning tree we know that ${\bf g}_{u-e_i+e_j}$ is a spanning two-forest, i.e. $u-e_i+e_j\in F_0$.
    \item For contradiction, assume that for all $j$ such that $u_j<v_j$ we have $u-e_i+e_j\notin F_0$. This assumption means that for any edge $j$ we cut in the graph ${\bf g}_{u-e_i}$ corresponding to the vector $u-e_i$, the cycle in ${\bf g}_{u-e_i}$ will stay intact. Let's do all these cuts; then the graph ${\bf g}_{u-e_i+\sum e_j}$ will still contain the cycle. The resulting graph contains the edge $i$ and all the cuts from $u$ and $v$, since the edge $i$ is in the graph ${\bf g}_v$ corresponding to $v$, this means that the resulting graph is a subgraph of ${\bf g}_v$. But by assumption ${\bf g}_v$ is a spanning two-forest and thus can not contain any cycles. We have a contradiction. 
\end{enumerate}
Applying \cite[Theorem 4.1]{GELFAND1987} gives us that the column matrix of the edges of $P_{F_0}$ forms a totally unimodular matrix and in particular are parallel to $e_j-e_i$. 
\end{proof}

\begin{lemma}\label{lemma: cayleyIDP}
Let $P$ and $Q$ both be edge-unimodular lattice polytopes with edges parallel to the columns of the same unimodular matrix $M$. Then $P$ and $Q$ satisfy \eqref{eq:CayeyIDP}. 
\end{lemma}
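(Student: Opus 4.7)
The plan is to reduce the statement directly to Theorem \ref{thm: Minkowski sum edge-unimodular} (the Howard/Danilov--Koshevoy theorem) by observing that the edge-unimodularity property is preserved under dilation by a positive integer. The theorem already gives $(P'+Q')\cap\ZZ^d = P'\cap\ZZ^d + Q'\cap\ZZ^d$ for any two lattice polytopes $P',Q'$ whose edges are all parallel to columns of a common unimodular matrix $M$, so the only thing to verify is that $a_1P$ and $a_2Q$ remain edge-unimodular with respect to the same matrix $M$ as $P$ and $Q$.

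First I would note that for a lattice polytope $P$ and a positive integer $a$, the dilation $aP$ is again a lattice polytope (its vertices lie in $\ZZ^d$). Moreover, each edge of $aP$ is precisely $a$ times the corresponding edge of $P$, so its direction vector is unchanged; in particular it remains parallel to the same column of $M$. Thus if $P$ is edge-unimodular with matrix $M$ then so is $a_1P$, and similarly $a_2Q$ is edge-unimodular with matrix $M$.

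Next I would apply Theorem \ref{thm: Minkowski sum edge-unimodular} to the pair $(a_1P,\,a_2Q)$. Its hypotheses are met: both are lattice polytopes with edges parallel to the columns of the same unimodular matrix $M$. The conclusion reads
\begin{equation*}
(a_1P) \cap \ZZ^d + (a_2Q) \cap \ZZ^d = (a_1P + a_2Q) \cap \ZZ^d,
\end{equation*}
which is exactly \eqref{eq:CayeyIDP}, proving the lemma.

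There is essentially no main obstacle here; the content of the lemma is that the Howard--Danilov--Koshevoy theorem is stable under dilating the two summands independently, and the only thing to check is that dilation by a positive integer preserves both the lattice-polytope property and the edge directions. I would keep the write-up short and cite Theorem \ref{thm: Minkowski sum edge-unimodular} as the sole non-trivial input.
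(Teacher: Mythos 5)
Your proof is correct and takes essentially the same approach as the paper: both observe that dilation by a positive integer preserves edge directions (and the lattice-polytope property), and then apply Theorem \ref{thm: Minkowski sum edge-unimodular} to the dilated pair $a_1P$ and $a_2Q$.
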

\begin{proof}
This follows directly from Theorem \ref{thm: Minkowski sum edge-unimodular} since edge directions are invariant under scaling. In particular $P$ and $Q$ have the same edge directions as $a_1P$ and $a_2Q$.
\end{proof}

\begin{proof}[Proof of Theorem \ref{thm:MainThmMassless}]
As discussed in {\bf (i)} above $P_U$ is edge-unimodular via  \cite[Theorem 4.1]{GELFAND1987} since it is a matroid polytope. By Lemma \ref{lem:P_F0matroid} $P_{F_0}$ is also edge-unimodular (again via  \cite[Theorem 4.1]{GELFAND1987} since it is a matroid polytope). Further we saw in the proof of Lemma \ref{lem:P_F0matroid} that the edges of $P_{F_0}$ are parallel to $e_j-e_i$, $i\neq j$, and saw in the proof of Theorem \ref{thm:MainThm} that the edges of $P_U$ are also parallel to $e_j-e_i$, $i\neq j$. Hence  $P_U$ and  $P_{F_0}$ are both edge-unimodular lattice polytopes with edges parallel to the columns of the same unimodular matrix. It follows by Lemma \ref{lemma: cayleyIDP} that \eqref{eq:CayeyIDP} is satisfied for $P_U$ and $P_{F_0}$. Thus Proposition \ref{prop:TsuchiyaCayleyIDP} applies and $P_G=P_U*P_{F_0}$ is IDP. 
\end{proof}

\begin{remark}
Since $P_U$ and $P_{F_0}$ are maitroid polytopes they have no interior lattice points and additionally they lay in parallel hyperplanes; hence the Cayley sum $ P_G=P_U*P_{F_0}$ also has no interior lattice points and  $P_G\cap \ZZ^{|E|}$ consists only of the vertices of $P_G$. This means that, if the columns of the matrix $A_-$ are the exponent vectors of the polynomial $\G=\U+\F_0$, then the semi-group $\NN A= \NN (A_-\times \{1\})$ is normal, and the associated toric ideal $I_A$ is Cohen-Macualay. 
\end{remark}

\begin{acknowledgments}
We would like to thank Georgios Papathanasiou and Volker Schomerus for many helpful discussions, F.T. especially acknowledge G. Papathanasiou's supervision resulting in this project. We would also like to thank Uli Walther for many helpful correspondences, including sharing with us his proof of Lemma \ref{lemma:appendix} below, and for allowing us to include it here.  
\end{acknowledgments}
\appendix 

\section{A Lemma on Lattice Points}\label{Appendix}
\smallskip
\begin{center}{\bf Author: Uli Walther} \\ Department of Mathematics, 
Purdue University\\
{\em Email}: \texttt{walther@purdue.edu}\end{center}
\medskip
In this section we will consider $G=(V,E)$ as any Feynman graph, not necessarily a 1PI graph, and let $E_m$ denote the set of all edges $e$ with $m_e\neq 0$. 
In order to rule out complications from trivialities we assume that $G$ has at least one edge that is not a loop.
In other words, we assume that the rank of the associated co-graphic matroid is greater than one.

\begin{lemma}\label{lemma:appendix} Let $G=(V,E)$ be any Feynman graph and let $E_m\subset E$ be the set of all edges with non-zero mass, $m_e\neq 0$. Let $\U$ be as in \eqref{eq:U_Symanzik}, with $P_U=\newt(\U)$ and let ${\Delta}_{E_m}$ be the simplex in  $\RR^{|E_m|}$ given by the convex hull of the set of standard basis vectors $\{e_j\;|\; j\in E_m \}$ with  $\widetilde{\Delta}_{E_m}$ being the convex hull of this simplex along with the vector $0\in \RR^{|E_m|}$. The lattice points contained in the polytope $P=P_U+\widetilde{\Delta}_{E_m}$ are exactly those of the form $v + v^\prime$ where $v$ is a vertex of $P_U$ and $v^\prime$
is a vertex
of $\widetilde{\Delta}_{E_m}$.
\end{lemma}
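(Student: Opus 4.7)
My plan is to deduce the lemma from Howard's theorem (Theorem \ref{thm: Minkowski sum edge-unimodular}) together with the observation that for each of $P_U$ and $\widetilde{\Delta}_{E_m}$ the lattice points are already its vertices. Once both summands are seen to be edge-unimodular with respect to a common totally unimodular matrix, Howard's theorem guarantees that $(P_U + \widetilde{\Delta}_{E_m}) \cap \ZZ^{|E|}$ splits as a Minkowski sum of the two lattice point sets, and these sets are precisely the vertex sets, giving the conclusion. The reverse inclusion (every sum of vertices is a lattice point in $P$) is immediate, so the work lies entirely in the forward direction.

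To verify the hypotheses of Howard's theorem, I would first observe that $P_U$ is the matroid polytope of the co-graphic matroid of $G$, whose bases are the edge complements of spanning trees; by \cite[Theorem 4.1]{GELFAND1987} its edge directions lie among $\{e_i - e_j : i \neq j\}$, and since it is a $0/1$ polytope contained in the hypersimplex $\Delta(|E|, |E|-L)$, every lattice point of $P_U$ is already a $0/1$ indicator of a basis and hence a vertex. For $\widetilde{\Delta}_{E_m} = \conv(\{0\} \cup \{e_j : j \in E_m\})$, the edge directions are $e_j$ and $e_i - e_j$, and the lattice points are exactly the $|E_m|+1$ vertices (the only integral vectors with nonnegative coordinates summing to at most $1$). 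The union $\{e_i\} \cup \{e_i - e_j : i \neq j\}$ is the column set of a totally unimodular matrix $M$: augmenting with one extra coordinate $x_0$ and rewriting $e_i = e_i - e_0$ realises these columns as the signed incidence matrix of a complete directed graph on $\{0,1,\ldots,|E|\}$, and projection back onto the original $|E|$ coordinates preserves total unimodularity. Hence $P_U$ and $\widetilde{\Delta}_{E_m}$ are edge-unimodular with respect to the same $M$.

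Applying Theorem \ref{thm: Minkowski sum edge-unimodular} then yields
$$(P_U + \widetilde{\Delta}_{E_m}) \cap \ZZ^{|E|} = (P_U \cap \ZZ^{|E|}) + (\widetilde{\Delta}_{E_m} \cap \ZZ^{|E|}),$$
and substituting the two vertex descriptions established above gives precisely the required form $v + v'$. The main subtlety, in my view, is confirming that the lattice points of $P_U$ coincide with its vertices, since that is the ingredient \emph{beyond} edge-unimodularity that sharpens Howard's theorem into the explicit statement of the lemma; after that, checking total unimodularity of $M$ via the incidence matrix reformulation is routine, and the conclusion follows mechanically.
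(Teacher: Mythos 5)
Your proof is correct and takes a genuinely different route from the paper's. The appendix proof by Walther is a direct combinatorial induction argument on $|E|$: it writes a lattice point $w\in P_U+\widetilde{\Delta}_{E_m}$ as a convex combination $w=\sum c_i p_i$ of vertices, uses the rank constraint $|w|\in\{L,L+1\}$ to split into cases, invokes White's theorem on lattice points in the cone over a matroid polytope for the $|w|=L$ case, reduces the case where some coordinate vanishes by contracting an edge and inducting, and handles the remaining case $w=(1,\dots,1)$ by an explicit matroid-theoretic argument that differs between $E_m=E$ and $E_m\subsetneq E$. Your proof instead reuses Howard's theorem (Theorem~\ref{thm: Minkowski sum edge-unimodular}), which the paper already employs to prove Theorem~\ref{thm:MainThm}, together with the elementary fact that a $0/1$ polytope has no lattice points other than its vertices: if $w$ is a lattice point of a $0/1$ polytope then $w\in\{0,1\}^n$, and writing $w=\sum c_i v_i$ with $c_i>0$, $\sum c_i=1$ and each $v_i\in\{0,1\}^n$ forces $v_i=w$ coordinatewise. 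Since $P_U$ (a matroid polytope) and $\widetilde{\Delta}_{E_m}$ are both $0/1$ polytopes with edge directions in $\{e_i\}\cup\{e_i-e_j\}$, and this column set is totally unimodular (your reformulation as a signed incidence matrix of a complete digraph on $\{0,\dots,|E|\}$ followed by deleting a row is a standard and valid certification), Howard's theorem immediately identifies $(P_U+\widetilde{\Delta}_{E_m})\cap\ZZ^{|E|}$ with the sum of the two vertex sets. Your argument is shorter and more uniform — it handles the general and 1PI, massive and non-massive cases identically, whereas the paper's proof treats $E_m=E$ and $E_m\subsetneq E$ separately and relies on induction. What you give up relative to the paper's proof is self-containedness: the appendix is attributed to a separate author and the combinatorial proof does not assume the reader has Howard's theorem available, which may be why it was written the way it was. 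One small typo: the co-graphic matroid of $G$ has rank $L$ (the loop number), so $P_U\subset\Delta(|E|,L)$, not $\Delta(|E|,|E|-L)$; this does not affect your argument, since all you actually use is $P_U\subset[0,1]^{|E|}$, but it is worth correcting.
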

\begin{proof}
Let $M_U^*$ denote the co-graphic matroid of the graph $G$ and $P_U$ its matroid polytope. The lemma clearly holds if $|E| = 1$, and more generally in the case where $E$ is the union of a basis for $M_U^*$ with a set of loops, since then $M_U^*$  has exactly one basis and so $P_U$ is a point and the sum $P_U+\widetilde{\Delta}_{E_m}$ is
a shifted standard simplex.
Let $w$ be a point of $P_U+\widetilde{\Delta}_{E_m}$. Then $w$ can be written as a real linear combination \begin{equation}
w =\sum c_i p_i    \label{eq:w_combo}
\end{equation}where the real numbers $c_i\geq 0$ with $ |c|=\sum c_i=1$ and where each $p_i$
is a vertex of the polytope $P_U+\widetilde{\Delta}_{E_m}$. Let $r:={\rm rank} (M_U^*)$. Note that,  for the vertex $p_i$ in $\RR^{|E_m|}$ the entry-wise sum $|p_i|$ equals either $r$ or $r+1$. It follows that $|w|\in \{ r, r+1\}$.
Now assume in addition that $w$ a lattice point; we must then have $|w|\in \{ r, r+1\}$. Moreover, in either case, since $r$ and $r + 1$ are consecutive integers, the linear combination $\sum c_i p_i$ can only non-trivially involve such $p_i$ with $|w| = |p_i
|$. 

Let $\mathcal{M}_B$ be the set of basis of a matroid $\mathcal{M}$ on ground set $E$ with $v_B\in \ZZ^{|E|}$ denoting the indicator vector of a base $B\in\mathcal{M}_B $; results of White \cite[Theorems~1 and 2]{white1977basis} tell us that the points $(1, a)$ in $\ZZ\times \ZZ^{|E|}$ inside the positive
cone spanned by all pairs $(1, v_B)$,  are precisely the vectors $(1, v_B)$ for $B\in  \mathcal{M}_B$. In our case this result tells us 
that if $|w| = r$ (in which case each $p_i$ with nonzero $c_i$ must have $|p_i
| = r$ and be the indicator vector of a
basis for $M_U^*$ then $w$ is a vertex of $P_U$ , and so $w = w + 0 \in P_U +\widetilde{\Delta}_{E_m}$ is as stipulated in the lemma. We thus assume from now on that $|w| = r + 1$, so $w  \in P_U +\widetilde{\Delta}_{E_m}$.

We consider first the massive case $E_m = E$. Both $P_E=P_U$ and $\widetilde{\Delta}_{E_m}$ are contained in the unit cube, so any
lattice point $w$ of $ P_U +\widetilde{\Delta}_{E_m}$ has coordinate value $x_e(w)$ in the set $\{0, 1, 2\}$, for any $e \in E$. If $x_e(w) = 0$
then all nontrivial terms in \eqref{eq:w_combo} must also satisfy $x_e(p_i) = 0$. Since the set of exponent vectors in $\U$ with
vanishing $e$-coordinate is made of the indicator vectors of the bases for the submatroid of bases of $M_U^*$ that
avoid $e$ (the cographic matroid to the graph derived from $G$ by contracting $e$), it follows by induction on $|E|$
that in this case $w$ is as stipulated in the lemma.

We can therefore assume that there is no $e\in E$ with $x_e(w) = 0$ and so $|w| \geq |E| \geq r$. On the other hand,
we know that $|w| = r + 1$, and so $|E|\in \{r - 1, r\}$. In the latter case, $M_U^*$  is Boolean where the lemma is
straightforward (a {\em Boolean} matroid is one whose only base is the ground set). So we are reduced to checking the case $|E| = r + 1$ which forces $w = (1, \dots , 1)$. In the
massive case $E_m = E$, choose any basis $B$ for $M_U^*$, necessarily of size r. Its indicator vector is the difference
$w - e_f$ for the edge $\{f\} := E -B$ and thus $w = (w - e_f) +e_f \in P_U +\widetilde{\Delta}_{E_m}$ is a sum of vertices as required.

In the non-massive case, $E_m$ is a proper subset of $E$. The previous arguments above show that we are reduced
to investigating $w = (1, \dots, 1)$, and $|E| \in \{r, r + 1\}$.
The Boolean case being trivial, it suffices to show that if $|E| = r + 1$ then $w = (1, \dots , 1)$ is either not in
$P_U +\widetilde{\Delta}_{E_m}$ at all, or equal to the sum of a basis indicator vector of $M_U^*$ with a suitable $e_f$ with $f \in E_m$. If
the latter fails, none of the bases for $M_U^*$ (all of which are of size $r = |E| - 1$) are the complement in $E$ of
an element of $E_m$. In other words, every element of $E_m$ is contained in each basis. In that case, $M_U^*$ is the
matroid sum of the Boolean matroid on $E_m$ (with unique basis $E_m$) with the co-graphic matroid $M_{U_o}^*$, of the graph $G_o$,
on the
ground set $E -E_m$ where $G_o$ is the graph derived from $G$ by deleting the edges of $E_m$. The matroid basis
polytope of $M_{U}^*$ is that of $M_{U_o}^*$
shifted by $\sum_{f\in E_m}e_f$. In other words, we have reduced the problem to the
massless case $E_m=\emptyset$. Then, however, $|w| = r + 1$ implies that $w$ cannot be in $P_U +\widetilde{\Delta}_{E_m}$ .
\end{proof}

\bibliography{Ref}

\providecommand{\href}[2]{#2}\begingroup\raggedright\begin{thebibliography}{10}

\bibitem{Folland}
G.~B. Folland, \emph{Quantum field theory}, vol.~149 of \emph{Mathematical
  Surveys and Monographs}.
\newblock American Mathematical Society, Providence, RI, 2008,
  \href{https://doi.org/10.1090/surv/149}{10.1090/surv/149}.

\bibitem{Weinberg:1995mt}
S.~Weinberg, \emph{{The Quantum theory of fields. Vol. 1: Foundations}}.
\newblock Cambridge University Press, 6, 2005.

\bibitem{Peskin:1995ev}
M.~E. Peskin and D.~V. Schroeder, \emph{{An Introduction to quantum field
  theory}}.
\newblock Addison-Wesley, Reading, USA, 1995.

\bibitem{Kashiwara1976}
M.~Kashiwara and T.~Kawai, \emph{{Holonomic Systems of Linear Differential
  Equations and Feynman Integrals}},
  \href{https://doi.org/10.2977/prims/1195196602}{\emph{Publ. Res. Inst. Math.
  Sci.} {\bfseries 12} (1976) 131 -- 140}.

\bibitem{GKZ1}
I.~M. Gel'fand, A.~V. Zelevinsky and M.~M. Kapranov, \emph{Hypergeometric
  functions and toral manifolds},
  \href{https://doi.org/10.1007/BF01078777}{\emph{Functional Analysis and its
  applications} {\bfseries 23} (1989) 94--106}.

\bibitem{GKZ1b}
I.~M. Gel'fand, A.~V. Zelevinsky and M.~M. Kapranov, \emph{Correction to the
  paper: ``{H}ypergeometric functions and toric varietie'' [{F}unktsional.
  {A}nal. i {P}rilozhen. {\bf 23} (1989), no. 2, 12--26; {MR}1011353
  (90m:22025)]}, \href{https://doi.org/10.1007/BF01078854}{\emph{Funktsional.
  Anal. i Prilozhen.} {\bfseries 27} (1993) 91}.

\bibitem{GKZ2}
I.~M. Gel'fand, M.~M. Kapranov and A.~V. Zelevinsky, \emph{Generalized {E}uler
  integrals and {$A$}-hypergeometric functions},
  \href{https://doi.org/10.1016/0001-8708(90)90048-R}{\emph{Adv. Math.}
  {\bfseries 84} (1990) 255--271}.

\bibitem{gelfand2008discriminants}
I.~M. Gelfand, M.~Kapranov and A.~Zelevinsky, \emph{Discriminants, resultants,
  and multidimensional determinants}.
\newblock Springer Science \& Business Media, 2008.

\bibitem{Lee2013}
R.~N. Lee and A.~A. Pomeransky, \emph{{Critical points and number of master
  integrals}}, \href{https://doi.org/10.1007/JHEP11(2013)165}{\emph{JHEP}
  {\bfseries 11} (2013) 165},
  [\href{https://arxiv.org/abs/1308.6676}{{\ttfamily 1308.6676}}].

\bibitem{delaCruz:2019}
L.~de~la Cruz, \emph{{Feynman integrals as A-hypergeometric functions}},
  \href{https://doi.org/10.1007/JHEP12(2019)123}{\emph{JHEP} {\bfseries 12}
  (2019) 123}, [\href{https://arxiv.org/abs/1907.00507}{{\ttfamily
  1907.00507}}].

\bibitem{Klausen2019}
R.~P. Klausen, \emph{{Hypergeometric Series Representations of Feynman
  Integrals by GKZ Hypergeometric Systems}},
  \href{https://doi.org/10.1007/JHEP04(2020)121}{\emph{JHEP} {\bfseries 04}
  (2020) 121}, [\href{https://arxiv.org/abs/1910.08651}{{\ttfamily
  1910.08651}}].

\bibitem{FENG2020}
T.-F. Feng, C.-H. Chang, J.-B. Chen and H.-B. Zhang, \emph{{GKZ-hypergeometric
  systems for Feynman integrals}},
  \href{https://doi.org/10.1016/j.nuclphysb.2020.114952}{\emph{Nuclear Physics
  B} {\bfseries 953} (2020) 114952}.

\bibitem{Klemm2019}
A.~Klemm, C.~Nega and R.~Safari, \emph{{The $l$-loop Banana Amplitude from GKZ
  Systems and relative Calabi-Yau Periods}},
  \href{https://doi.org/10.1007/JHEP04(2020)088}{\emph{JHEP} {\bfseries 04}
  (2020) 088}, [\href{https://arxiv.org/abs/1912.06201}{{\ttfamily
  1912.06201}}].

\bibitem{Bonisch2020}
K.~B\"onisch, F.~Fischbach, A.~Klemm, C.~Nega and R.~Safari, \emph{{Analytic
  structure of all loop banana integrals}},
  \href{https://doi.org/10.1007/JHEP05(2021)066}{\emph{JHEP} {\bfseries 05}
  (2021) 066}, [\href{https://arxiv.org/abs/2008.10574}{{\ttfamily
  2008.10574}}].

\bibitem{Kalmykov:2020cqz}
M.~Kalmykov, V.~Bytev, B.~A. Kniehl, S.-O. Moch, B.~F.~L. Ward and S.~A. Yost,
  \emph{{Hypergeometric Functions and Feynman Diagrams}},  in
  \emph{{Antidifferentiation and the Calculation of Feynman Amplitudes}}, 12,
  2020, \href{https://arxiv.org/abs/2012.14492}{{\ttfamily 2012.14492}},
  \href{https://doi.org/10.1007/978-3-030-80219-6_9}{DOI}.

\bibitem{Bogner:2010kv}
C.~Bogner and S.~Weinzierl, \emph{{Feynman graph polynomials}},
  \href{https://doi.org/10.1142/S0217751X10049438}{\emph{Int. J. Mod. Phys. A}
  {\bfseries 25} (2010) 2585--2618},
  [\href{https://arxiv.org/abs/1002.3458}{{\ttfamily 1002.3458}}].

\bibitem{Berkesch2014}
C.~Berkesch, J.~Forsgård and M.~Passare, \emph{{Euler-Mellin integrals and
  A-hypergeometric functions}},
  \href{https://doi.org/10.1307/mmj/1395234361}{\emph{Michigan Mathematical
  Journal} {\bfseries 63} (2014) 101 -- 123}.

\bibitem{sturmfels1996grobner}
B.~Sturmfels, \emph{Grobner bases and convex polytopes}, vol.~8.
\newblock American Mathematical Soc., 1996.

\bibitem{SST}
M.~Saito, B.~Sturmfels and N.~Takayama, \emph{Gr{\"o}bner deformations of
  hypergeometric differential equations}, vol.~6.
\newblock Springer Science \& Business Media, 2013.

\bibitem{adolphson1994hypergeometric}
A.~Adolphson, \emph{Hypergeometric functions and rings generated by monomials},
  {\emph{Duke Mathematical Journal} {\bfseries 73} (1994) 269--290}.

\bibitem{huber2000computing}
B.~Huber and R.~R. Thomas, \emph{{Computing {G}r\"obner Fans of Toric Ideals}},
  \href{https://doi.org/10.1080/10586458.2000.10504409}{\emph{Experimental
  Mathematics} {\bfseries 9} (2000) 321--331}.

\bibitem{gfan}
A.~N. Jensen, ``{G}fan, a software system for {G}r{\"o}bner fans and tropical
  varieties.'' Available at
  \url{http://home.imf.au.dk/jensen/software/gfan/gfan.html}.

\bibitem{M2}
D.~R. Grayson and M.~E. Stillman, ``{Macaulay2, a software system for research
  in algebraic geometry}.'' Available at
  \url{http://www.math.uiuc.edu/Macaulay2}.

\bibitem{CLO}
D.~Cox, J.~Little and D.~O'Shea, \emph{Ideals, varieties, and algorithms: an
  introduction to computational algebraic geometry and commutative algebra}.
\newblock Springer Science \& Business Media, 2013.

\bibitem{eisenbud2013commutative}
D.~Eisenbud, \emph{Commutative Algebra: with a view toward algebraic geometry},
  vol.~150.
\newblock Springer Science \& Business Media, 2013.

\bibitem{MS}
M.~Micha{\l}ek and B.~Sturmfels, \emph{Invitation to nonlinear algebra},
  vol.~211.
\newblock American Mathematical Soc., 2021.

\bibitem{Oxley}
J.~Oxley, \emph{Matroid theory}, vol.~21 of \emph{Oxford Graduate Texts in
  Mathematics}.
\newblock Oxford University Press, Oxford, second~ed., 2011,
  \href{https://doi.org/10.1093/acprof:oso/9780198566946.001.0001}{10.1093/acprof:oso/9780198566946.001.0001}.

\bibitem{hochster1972rings}
M.~Hochster, \emph{Rings of invariants of tori, cohen-macaulay rings generated
  by monomials, and polytopes}, {\emph{Annals of Mathematics} {\bfseries 96}
  (1972) 318--337}.

\bibitem{stanley2007combinatorics}
R.~P. Stanley, \emph{Combinatorics and commutative algebra}, vol.~41.
\newblock Springer Science \& Business Media, 2007.

\bibitem{cox2014integer}
D.~A. Cox, C.~Haase, T.~Hibi and A.~Higashitani, \emph{Integer decomposition
  property of dilated polytopes}, {\emph{The Electronic Journal of
  Combinatorics} (2014) P4--28}.

\bibitem{HOWARD2007}
B.~J. Howard, \emph{Matroids and geometric invariant theory of torus actions on
  flag spaces},
  \href{https://doi.org/10.1016/j.jalgebra.2006.11.014}{\emph{Journal of
  Algebra} {\bfseries 312} (2007) 527--541}.

\bibitem{OWR}
\emph{Mini-{W}orkshop: {P}rojective {N}ormality of {S}mooth {T}oric
  {V}arieties},  vol.~4, pp.~2283--2319.
\newblock 2007.

\bibitem{DANILOV2004}
V.~I. Danilov and G.~A. Koshevoy, \emph{Discrete convexity and
  unimodularity—i},
  \href{https://doi.org/10.1016/j.aim.2003.11.010}{\emph{Advances in
  Mathematics} {\bfseries 189} (2004) 301--324}.

\bibitem{tsuchiya2019cayley}
A.~Tsuchiya, \emph{Cayley sums and minkowski sums of $2$-convex-normal lattice
  polytopes},  \href{https://arxiv.org/abs/1804.10538}{{\ttfamily 1804.10538}}.

\bibitem{Haase2017}
C.~Haase and J.~Hofmann, \emph{Convex-normal (pairs of) polytopes},
  \href{https://doi.org/10.4153/CMB-2016-057-0}{\emph{Canad. Math. Bull.}
  {\bfseries 60} (2017) 510--521}.

\bibitem{GELFAND1987}
I.~Gelfand, R.~Goresky, R.~MacPherson and V.~Serganova, \emph{Combinatorial
  geometries, convex polyhedra, and schubert cells},
  \href{https://doi.org/10.1016/0001-8708(87)90059-4}{\emph{Advances in
  Mathematics} {\bfseries 63} (1987) 301--316}.

\bibitem{Brovik1997}
A.~V. Borovik, I.~M. Gelfand, A.~Vince and N.~White, \emph{The lattice of flats
  and its underlying flag matroid polytope},
  \href{https://doi.org/10.1007/BF02558461}{\emph{Ann. Comb.} {\bfseries 1}
  (1997) 17--26}.

\bibitem{white1977basis}
N.~L. White, \emph{The basis monomial ring of a matroid}, {\emph{Advances in
  Mathematics} {\bfseries 24} (1977) 292--297}.

\end{thebibliography}\endgroup

\end{document}